%% file: main.tex
\documentclass[sigconf, table]{acmart}

\input{includes}
\input{defines}

\RequirePackage{OpenProc-Mods}

\AddToHook{cmd/maketitle/before}{%
  \input{edbt-macros-v30-n1}}

\def\BibTeX{{\rm B\kern-.05em{\sc i\kern-.025em b}\kern-.08em
    T\kern-.1667em\lower.7ex\hbox{E}\kern-.125emX}}

\begin{document}


\OPtitle{$\text{C}^2$: Cache-Conscious Succinct Tries with \\Adaptive Unary Path Compression}



\OPauthor{Kepan Zhang}
\affiliation{%
  \institution{Georgia Institute of Technology}
  \country{USA}
}
\email{zhangkepan03@gmail.com}

\OPauthor{Tiancheng Zhao}
\affiliation{%
  \institution{Georgia Institute of Technology}
    \country{USA}
}
\email{tzhao350@gatech.edu}

\OPauthor[0000-0003-2232-3305]{Helen Xu}
\affiliation{%
   \institution{Georgia Institute of Technology}
     \country{USA}
 }
 \email{hxu615@gatech.edu}


\renewcommand{\shortauthors}{Zhang et al.}



\input{new-abstract}

 \keywords{succinct, tries, cache, optimization, path, compression}

\maketitle

\input{introduction}
\input{prelim}
\input{bv-design}
\input{unary-path}
\input{experiments}
\input{conclusion}

\clearpage
\bibliographystyle{ACM-Reference-Format}
\bibliography{sample}


\end{document}

%% file: includes.tex
\usepackage{amsmath,amsfonts}
\usepackage{algorithm,algpseudocode}
\usepackage{graphicx}
\usepackage{textcomp}
\usepackage{xspace}
\usepackage{url}
\usepackage{tabularx}
\usepackage{multirow}
\usepackage{forest}
\usepackage{tikz}
\usepackage{xcolor}
\usepackage{pgfplots}
\usepackage{booktabs}
\usepackage{algpseudocode}
\usepackage{balance}
\usepackage{microtype}
\usepackage{enumitem}
\usepackage{subcaption}
\usepackage{makecell}
\usetikzlibrary{pgfplots.groupplots}
\usepackage{pgfplotstable}

\pgfplotsset{compat=1.9}
\usetikzlibrary{shapes.geometric,positioning,arrows.meta, shadows, shadows.blur}
\tikzset{every shadow/.style={shadow xshift=5pt,shadow yshift=-5pt}}
\def\BibTeX{{\rm B\kern-.05em{\sc i\kern-.025em b}\kern-.08em
    T\kern-.1667em\lower.7ex\hbox{E}\kern-.125emX}}


%% file: defines.tex
\newcommand{\para}[1]{\smallskip\noindent\textbf{#1.}}
\newcommand{\proc}{\texttt}
\newcommand{\rankop}{\proc{rank}\xspace}
\newcommand{\selectop}{\proc{select}\xspace}
\newcommand{\selectopnospace}{\proc{select}}
\newcommand{\rankopnospace}{\proc{rank}}
\newcommand{\var}{\texttt}

\newcommand{\newrev}[1]{#1}
\newcommand{\newmymarginpar}[1]{}
\newcommand{\newmymarginnote}[1]{}
\newcommand{\newnew}[1]{#1}

\newcommand{\rev}[1]{#1}
\newcommand{\mymarginpar}[1]{}
\newcommand{\mymarginnote}[1]{}
\newcommand{\new}[1]{#1}

\definecolor{magenta4}{rgb}{0.5625,0,0.5625}
\definecolor{green4}{rgb}{0,0.5625,0}
\definecolor{orange4}{rgb}{0.98,0.31,0.09}

\newcommand{\head}{\texttt{head}\xspace}
\newcommand{\spillptr}{\texttt{spill\_ptr}\xspace}
\newcommand{\spilllist}{\texttt{spill\_list}\xspace}
\newcommand{\spilllistnospace}{\texttt{spill\_list}}
\newcommand{\dist}{\texttt{dist}\xspace}
\newcommand{\wordsdata}{\texttt{words}\xspace}
\newcommand{\urldata}{\texttt{url}\xspace}
\newcommand{\dnadata}{\texttt{dna}\xspace}
\newcommand{\xmldata}{\texttt{xml}\xspace}
\newcommand{\wikidata}{\texttt{wiki}\xspace}
\newcommand{\logdata}{\texttt{log}\xspace}
\newcommand{\wordsstar}{\texttt{words*}\xspace}
\newcommand{\urlstar}{\texttt{url*}\xspace}
\newcommand{\dnastar}{\texttt{dna*}\xspace}
\newcommand{\xmlstar}{\texttt{xml*}\xspace}

\newcommand{\numtries}{eight\xspace}
\newcommand{\numdatasets}{six\xspace}

\newcommand{\bvfstspeedup}{1.58$\times$\xspace}
\newcommand{\bvcocospeedup}{1.12$\times$\xspace}
\newcommand{\bvmarisaspeedup}{1.42$\times$\xspace}
\newcommand{\csqfstspeedup}{1.58$\times$\xspace}
\newcommand{\csqcocospeedup}{1.13$\times$\xspace}
\newcommand{\csqmarisaspeedup}{1.42$\times$\xspace}
\newcommand{\csqfstspacesaving}{1.27$\times$\xspace}
\newcommand{\csqcocospacesaving}{1.31$\times$\xspace}
\newcommand{\csqmarisaspacesaving}{1.30$\times$\xspace}
\newcommand{\toplevelspacesavings}{1.3$\times$\xspace}
\newcommand{\fsstvsrepair}{1.12$\times$\xspace}

\newcommand{\fsstmaxspacesavings}{1.3$\times$\xspace}

\newcommand{\marisarecursionspace}{1.17$\times$\xspace}
\newcommand{\marisarecursiontime}{1.30$\times$\xspace}

\newcommand{\repair}{re-pair\xspace}
\newcommand{\fsst}{FSST\xspace}
\newcommand{\sizeshort}{Size}
\newcommand{\queryshort}{Query}

\newcommand{\buildtimeshort}{Build}
\newcommand{\bv}{bitvector\xspace}
\newcommand{\Bv}{Bitvector\xspace}
\newcommand{\Bvs}{Bitvectors\xspace}
\newcommand{\bvs}{bitvectors\xspace}
\newcommand{\islink}{\var{IsLink}\xspace}
\newcommand{\haschild}{\var{HasChild}\xspace}
\newcommand{\louds}{\var{LOUDS}\xspace}
\newcommand{\child}{\var{child}\xspace}
\newcommand{\parent}{\var{parent}\xspace}
\usetikzlibrary{patterns}
\newcommand{\defn}[1]{\textit{#1}}
\renewcommand{\defn}[1]       {{\textit{\textbf{\boldmath #1}}}}
\newcommand{\figref}[1]         {Figure~\ref{fig:#1}}

\newcommand{\tabref}[1]        {Table~\ref{tab:#1}}
\newcommand{\secref}[1]         {Section~\ref{sec:#1}}

\newcommand{\secreftwo}[2]      {Sections \ref{sec:#1} and~\ref{sec:#2}}

\pgfplotsset{
    discard if/.style 2 args={
        x filter/.append code={
            \edef\tempa{\thisrow{#1}}
            \edef\tempb{#2}
            \ifx\tempa\tempb
                
            \fi
        }
    },
    discard if not/.style 2 args={
        x filter/.append code={
            \edef\tempa{\thisrow{#1}}
            \edef\tempb{#2}
            \ifx\tempa\tempb
            \else
                
            \fi
        }
    },
}

\pgfplotsset{compat=1.8,
    /pgfplots/ybar legend/.style={
    /pgfplots/legend image code/.code={%
       \draw[##1,/tikz/.cd,yshift=-0.25em]
        (0cm,0cm) rectangle (3pt,0.8em);},
   },
}
\pgfdeclareplotmark{rotated triangle}{%
  \pgfpathmoveto{\pgfqpoint{0pt}{-\pgfplotmarksize}}%
  \pgfpathlineto{\pgfqpointpolar{30}{\pgfplotmarksize}}%
  \pgfpathlineto{\pgfqpointpolar{150}{\pgfplotmarksize}}%
  \pgfpathclose
  \pgfusepathqstroke
}

\pgfdeclareplotmark{rotated triangle*}{%
  \pgfpathmoveto{\pgfqpoint{0pt}{-\pgfplotmarksize}}%
  \pgfpathlineto{\pgfqpointpolar{30}{\pgfplotmarksize}}%
  \pgfpathlineto{\pgfqpointpolar{150}{\pgfplotmarksize}}%
  \pgfpathclose
  \pgfusepathqfillstroke
}

\definecolor{safe-black}{RGB}{0,0,0}
\definecolor{safe-olive}{RGB}{0,73,73}
\definecolor{safe-teal}{RGB}{0,146,146}
\definecolor{safe-pink}{RGB}{255,109,182}
\definecolor{safe-peach}{RGB}{255,160,110}
\definecolor{safe-plum}{RGB}{73,0,146}
\definecolor{safe-cerulean}{RGB}{0,109,219}
\definecolor{safe-lavender}{RGB}{182,109,255}
\definecolor{safe-sky}{RGB}{109,182,255}
\definecolor{safe-baby}{RGB}{182,219,255}
\definecolor{safe-brick}{RGB}{146,0,0}
\definecolor{safe-brown}{RGB}{146,73,0}
\definecolor{safe-orange}{RGB}{219,209,0}
\definecolor{safe-green}{RGB}{36,255,36}
\definecolor{safe-yellow}{RGB}{255,255,109}
\definecolor{safe-darkgreen}{RGB}{36,200,36}

\pgfplotscreateplotcyclelist{mylist}{%
    {safe-pink, mark=square*},
    {safe-cerulean, mark=triangle*},
    {safe-peach, mark=x,mark options={scale=1.6}},
    {blue, mark=star,mark options={scale=1.6}},
    }

    \pgfplotscreateplotcyclelist{mylistshapes}{%
    {safe-pink, mark=square*},
    {safe-cerulean, mark=triangle*},
    {safe-peach,mark=*},
    {blue, mark=diamond*},
    }

\makeatletter
\def\BState{\State\hskip-\ALG@thistlm}
\makeatother

\newcommand{\bvdesign}{$\text{C}_1$\xspace}
\newcommand{\compressionscheme}{$\text{C}_2$\xspace}

\newcommand{\csquared}{$\texttt{C}^2$\xspace}



%% file: edbt-macros-v30-n1.tex

\def\EDBTISSN{2367-2005}
\def\EDBTISBN{978-3-89318-106-3}
\setcopyright{none}
\copyrightyear{\copyright\ 2026 Copyright held by the owner/author(s).
  Published on \url{OpenProceedings.org} under ISBN \EDBTISBN, series ISSN \EDBTISSN.
  Distribution of this paper is permitted under the terms of the
  Creative Commons license CC-by-nc-nd 4.0}
\acmYear{2026}
\acmDOI{}
\acmISBN{}

\acmConference[EDBT '27]{Extending Database Technology}{6-9 April 2027}{Lille (France)}

\settopmatter{printacmref=false, printccs=false, printfolios=false}

\newsavebox{\ximagebox}
\newlength{\ximageheight}
\newsavebox{\xglyphbox}
\newlength{\xglyphheight}
\newcommand{\xbox}[1]%
  {\savebox{\ximagebox}{#1}%
  \settoheight{\ximageheight}{\usebox{\ximagebox}}%
  \savebox{\xglyphbox}{\color{white}\char32}%
  \settoheight{\xglyphheight}{\usebox{\xglyphbox}}%
  \raisebox{\ximageheight}[0pt][0pt]{\raisebox{-\xglyphheight}[0pt][0pt]{%
    \makebox[0pt][l]{\usebox{\xglyphbox}}}}%
    \usebox{\ximagebox}%
    \raisebox{0pt}[0pt][0pt]{\makebox[0pt][r]{\usebox{\xglyphbox}}}}


\newsavebox{\LogoBox}
\sbox{\LogoBox}{\includegraphics[height=1cm]{OpenProceedings.png}}

\fancypagestyle{firstpagestyle}{%
  \fancyhf{}%
  \fancyfoot{}%
  \fancyhead[L,C]{}
  \fancyhead[R]{\href{https://OpenProceedings.org/}{\raisebox{15pt}[0pt][0pt]{\xbox{\usebox{\LogoBox}}}}}
  }
  
\hypersetup{%
  pdfcopyright={CC-by-nc-nd},
  pdfpublisher={OpenProceedings.org -- University of Konstanz, Germany},
  pdfcreator={LaTeX with acmart, hyperref, and EDBT modifications},
  pdfvolumenum={30},
  pdfissuenum={1},
  pdfisbn={\EDBTISBN},
  pdfissn={\EDBTISSN},
  pdfdoi={10.48786/edbt.2027.xx}
}

%% file: new-abstract.tex
\begin{OPabstract}
  Succinct tries are powerful string dictionaries because of their low memory
  footprint and fast query performance. However, existing succinct trie
  implementations face two key challenges to spatial locality: 1) they incur
  unnecessary cache misses during queries, especially during trie navigation
  operations, and 2) they waste significant space when the data contains many
  unary paths. We propose \csquared, a set of two techniques: $\text{C}_1$
  introduces a more cache-friendly layout for the \bv underlying succinct tries,
  and $\text{C}_2$ compresses redundant unary paths. We thoroughly redesign
  three state-of-the-art succinct tries: FST, CoCo-trie, and Marisa, producing
  $\texttt{C}^2$-FST, $\texttt{C}^2$-CoCo, and $\texttt{C}^2$-Marisa.
  Experiments on six diverse datasets show that the \bvdesign optimization
  improves query performance by \bvfstspeedup, \bvcocospeedup, and \bvmarisaspeedup, respectively,
  compared to the original FST, CoCo-trie, and Marisa.  Furthermore, the
  $\text{C}_2$ optimization achieves a \toplevelspacesavings smaller memory
  footprint on average. The succinct tries optimized with both aspects of
  $\texttt{C}^2$ achieve better space-time tradeoffs than their original
  versions and other state-of-the-art succinct tries, while using significantly
  less space than non-succinct tries like ART and C-ART.
\end{OPabstract}


%% file: introduction.tex
\section{Introduction}\label{sec:intro}\mymarginpar{Meta 1, R1 O2, R3 O1}

Succinct data structures have become essential in applications like large-scale
databases, graph processing, and text indexing, especially in memory-constrained
scenarios~\cite{sds_application}. They are string-data representations that use
close to the theoretical minimum number of bits while still supporting efficient
queries~\cite{sds_overview, munro2018succinct}. We focus on \defn{succinct
  tries}~\cite{surf, pdt, coco_0, coco_1, marisa}, which match or exceed the
performance of regular tries \rev{with orders of magnitude less memory.}

Succinct tries separate the trie \defn{topology} from the string \defn{data},
encoding the topology in a compact bitvector that uses only about 2 bits per
node on average compared to at least 128 bits for two pointers in traditional
tries~\cite{jacobson1989space}. Navigation requires auxiliary index operations
(e.g., ``\rankop'' and ``\selectop'') on top of this bitvector.

\input{latexfigs/cache_motivation}

\newmymarginpar{Meta 1, R1 O2, R3 O1/D1}\para{Challenges to locality}
  Existing \bv-based succinct tries exhibit suboptimal locality in two respects:
  1) the \emph{bitvector layout} in the topology, and 2) the \newmymarginpar{R2
    O3/D6}\newrev{\defn{unary paths}, or maximal trie paths of at least 2 nodes
    in which every node except the last has exactly one child}, in the data.
  For example, in~\figref{louds-example}, paths 0-1-4 (``ca'') and 0-2-5
  (``su'') are internal unary paths, whereas path 7-14-18 (``ie'') is a suffix
  unary path.~\secreftwo{bv-design}{unary-path} detail both challenges.

\rev{\tabref{cache-motivation}} confirms locality challenges due to navigation
operations across state-of-the-art succinct tries. For example, a child
navigation in Marisa~\cite{marisa} incurs at least 3 cache misses to access the
topology \bv compared to 1 in a pointer-based trie to follow the pointer.

\input{latexfigs/unary-wiki-log}

\newnew{Additionally, practical datasets often include strings with long
  shared prefixes followed by diverse \new{dangling} suffixes, which naturally
  introduce large numbers of unary paths in trie-based representations.  For
  example, collections of hyperlinks frequently share common prefixes such as
  ``en.wikipedia.org/wiki/'', while differing only in their trailing titles. As
  a result, large portions of the trie structure degenerate into long chains of
  unary nodes.

  Without effective compression, unary paths can consume a dominant fraction of
  succinct-trie storage, increasing memory footprint and disrupting cache
  locality.  \newmymarginpar{R1 O1}For example, as shown
  in~\tabref{unary-wiki-log}, the majority of suffix regions in realistic
  datasets consist of long, redundant unary chains in the two largest tested
  inputs. In the \wikidata\ dataset, 84.1\% of all branch edges correspond to
  compressible unary paths, with an average compressible length of 271
  characters and a maximum compressible length of over 8K characters. Similarly,
  the \logdata\ dataset contains 78.8\% compressible unary paths.}

\para{Optimizing for locality in \bv-based succinct tries} \emph{We introduce
  \csquared, a set of techniques that improve locality in both the topology and
  the data of succinct tries.} \newmymarginpar{Meta 3, R1 O3, R3 O2}\newrev{We target two operations: the \defn{existence query} (checking
  whether a key exists) and the \defn{range query} (returning a range of keys).}

The first ``C'' of~\csquared improves cache locality of the \selectop index
in succinct tries. We introduce the \defn{functional index}, a novel
\selectop index layout that maps straightforwardly to bit positions, enabling
both \rankop and \selectop indexes to be inlined in the bit sequence.
\mymarginpar{Meta 4}\rev{Using the functional index, we cut random accesses to
  two per child navigation in LOUDS-based succinct tries, which include many
  state-of-the-art tries such as the FST~\cite{surf},
  CoCo-trie~\cite{coco_0}, and Marisa~\cite{marisa}. \tabref{cache-motivation}
  and \secref{experiments} show that this optimization reduces cache misses by
  19\% on average with at most $2\%$ overall space overhead.}

\mymarginpar{R3 O2, Meta 4}\newrev{The second ``C'' of~\csquared is an
  \defn{adaptive compression} algorithm that gives \emph{any} succinct trie
  access to compression in the ``tail container,'' which stores the remaining
  string suffixes at the end of trie paths. Building on
  ``\repair''~\cite{larsson2000off} and ``recursion''~\cite{marisa}, two
  orthogonal techniques for path compression, we enable all
  tries to use both recursion and the Fast Static Symbol Table
  (\fsst)~\cite{fsst}, a lightweight scheme that achieves good compression
  ratios with fast build times. Although recursion and compressed tail
  containers appear in prior work, this is the first generalization of
  recursion across tail containers, exposing space-time tradeoffs for any
  succinct trie.}


\input{latexfigs/intro_results_summary}
\para{Contributions} Our primary contributions are as follows:
\begin{itemize}[leftmargin=*,noitemsep]
\item We redesign the succinct-trie \rev{topology} with a \emph{cache-conscious}
  functional index that inlines both \rankop and \selectop indexes, minimizing
  cache misses during trie navigation.
\item \rev{We propose an \emph{adaptive compression scheme} that selects the
    best compression strategy per dataset to balance space and query
    performance.} \newrev{Our experiments indicate that the chosen strategy improves
  space efficiency by \toplevelspacesavings on average with similar
   (within $1.05\times$) query performance compared to the original succinct trie versions.} Furthermore, increasing compression aggressiveness exposes further space-time tradeoffs.
  \newrev{For example, on Marisa, increasing ``recursion'' for more compression yields
  \marisarecursionspace space savings, but \marisarecursiontime slower queries.}
\item We integrate these techniques into three state-of-the-art succinct tries
  --- FST~\cite{surf}, CoCo-trie~\cite{coco_0, coco_1}, and Marisa~\cite{marisa}
  ---
  improving query
  performance on the whole (with both \bvdesign and \compressionscheme
  optimizations) by \csqfstspeedup, \csqcocospeedup, and \csqmarisaspeedup,
  respectively.
\end{itemize}

\para{Results summary} Overall, \csquared-optimized
tries improve cache locality and performance over their original counterparts with negligible (<4\%) additional space overhead. 
\csquared-FST achieves better space-time performance than FST on all datasets. 
\rev{ 
    \csquared-CoCo outperforms CoCo' by \csqcocospeedup in query latency and is \csqcocospacesaving smaller in space, on  average across the six datasets
        \footnote{The original CoCo-trie was designed for prefix-only datasets and fails to build on the original datasets. 
        We implemented CoCo' by integrating CoCo-trie's topology with \csquared-CoCo.~\secref{experiments} provides the full details and shows that CoCo and CoCo' achieve almost identical performance and space usage on the prefix-only datasets.
        }. 
    \csquared-Marisa improves Marisa's query performance by \csqmarisaspeedup with similar memory consumption, and is generally the strongest \csquared index.
}

\figref{intro-results-summary} \rev{shows the query latency and size of the
  different tries} on the \rev{\wikidata} and \rev{\logdata} datasets.
\csquared-Marisa dominates all other succinct tries on query latency with the
lowest (or within a few percent of the lowest) space usage. \rev{ART and
C-ART, two non-succinct tries, achieve lower query latency at the cost of
higher space usage.}

\rev{These space-time improvements stem from the cache-locality gains of
  \csquared, as demonstrated by the cache-miss counts
  in~\tabref{cache-motivation}.}


%% file: latexfigs/cache_motivation.tex
\begin{table}[t]
  \caption{ Average number of LLC misses per query on the two largest datasets.
    FST~\cite{surf}, CoCo-trie~\cite{coco_1} and Marisa~\cite{marisa} are
    \bv-based succinct tries, while C-ART~\cite{hybrid_index} is a pointer-based
    compact trie.  Tries prefixed with \csquared- are our cache-optimized
    versions.  }
  \resizebox{\columnwidth}{!}{%
    \begin{tabular}{llllllll}
        \toprule
        \textit{Dataset}
        & \textit{FST}
        & \textit{\begin{tabular}{@{}l@{}}\csquared-\\FST\end{tabular}}
        & \textit{CoCo'}
        & \textit{\begin{tabular}{@{}l@{}}\csquared-\\CoCo\end{tabular}}
        & \textit{Marisa}
        & \textit{\begin{tabular}{@{}l@{}}\csquared-\\Marisa\end{tabular}}
        & \textit{C-ART} \\
        \midrule

        \wikidata~\cite{wiki-dataset}
        & 82
        & 78
        & 66
        & 49
        & 56
        & 58
        & \textbf{23} \\

        \logdata~\cite{log-dataset}
        & 235
        & 230
        & 57
        & \textbf{45}
        & 58
        & 62
        & 117 \\

        \bottomrule
    \end{tabular}
}
\label{tab:cache-motivation}
\end{table}

%% file: latexfigs/unary-wiki-log.tex
\begin{table}[t]

\caption{
Unary-path statistics for the two largest datasets.
\#Branch denotes the number of branching edges. Given a path length $\ell$, we
report the percentage of paths with lengths
 $\ell=1$, $1<\ell \leq 3$, and  
$\ell> 3$.
$\ell_{AVG}$ and $\ell_{MAX}$ denote the average and maximum
lengths of compressible unary paths. 
}
\centering
\small
\begin{tabular}{lrrrrrr}
\toprule
\textit{Dataset}
& \textit{\# Branch}
& \textit{$\ell = 1$}
& \textit{$1 < \ell \leq 3$}
& \textit{$\ell > 3$}
& $\ell_{AVG}$
& $\ell_{MAX}$ \\
\midrule

\wikidata~\cite{wiki-dataset}
& 467K
& 15.5\%
& 0.4\%
& 84.1\%
& 271
& 8676 \\

\logdata~\cite{log-dataset}
& 7,284K
& 17.4\%
& 3.8\%
& 78.8\%
& 65
& 8196 \\

\bottomrule
\end{tabular}

\label{tab:unary-wiki-log}

\end{table}


%% file: latexfigs/intro_results_summary.tex
\begin{figure}[t]
    \pgfplotsset{every axis title/.append style={at={(0.5,0.95)}}}
    \centering
    \begin{tikzpicture}
    \begin{groupplot}[
        group style={group size=1 by 2},
        width=8cm, height=4cm,
        ymajorgrids=true,
        xmajorgrids=true,
        minor x tick num=1,
        xminorgrids=true,
        minor y tick num= 3,
        yminorgrids = true,
        minor grid style=loosely dotted,
        legend columns=4,
        legend entries={FST, \bvdesign-FST, \csquared-FST, CoCo', \bvdesign-CoCo, \csquared-CoCo, Marisa, \bvdesign-Marisa, \csquared-Marisa, Marisa-1, \bvdesign-Marisa-1, \csquared-Marisa-1, PDT, C-ART},
        legend to name=PerformanceLegend,
        scatter/classes={
            FST={mark=*, fill=safe-brown},
            C1-FST={mark=*, fill=safe-olive},
            C2-FST={mark=*, fill=safe-teal},
            CoCo'={mark=rotated triangle*, safe-black},
            C1-CoCo={mark=rotated triangle*, safe-lavender},
            C2-CoCo={mark=rotated triangle*, safe-peach},
            Marisa={mark=diamond*, safe-plum},
            C1-Marisa={mark=diamond*, safe-green},
            C2-Marisa={mark=diamond*, safe-sky},
            Marisa-1={mark=halfsquare*, safe-cerulean},
            C1-Marisa-1={mark=halfsquare*, safe-brick},
            C2-Marisa-1={mark=halfsquare*, orange},
            PDT={mark=otimes, blue},
            C-ART={mark=oplus, red}
        },
        scatter src=explicit symbolic
    ]

    \nextgroupplot[
        title={WIKI (359 MB)},
        ylabel={Size (normalized to ART)},
        legend to name=PerformanceLegend
    ]
    \addplot[scatter,only marks,mark size=3.5pt,point meta=explicit,scatter src=explicit symbolic]
    table[x=query,y=space,meta=trie]{tsvs/results-wiki-new-2.tsv};

    \nextgroupplot[
        title={LOG (586 MB)},
        xlabel={Query latency (normalized to ART)},
        ylabel={Size (normalized to ART)},
        legend to name=PerformanceLegend
    ]
    \addplot[scatter,only marks,mark size=3.5pt,point meta=explicit,scatter src=explicit symbolic]
    table[x=query,y=space,meta=trie]{tsvs/results-log-new-2.tsv};

    \end{groupplot}
    \end{tikzpicture}

    \begin{center}
    \ref{PerformanceLegend}
    \end{center}
    \caption{
    \rev{Space-time performance
        comparison on \wikidata~\cite{wiki-dataset} and
        \logdata~\cite{log-dataset} (normalized to ART). We use FSST as the tail
        container in all \csquared-tries. The prefixes "\bvdesign-"/"\csquared-"
        indicate different optimization strategies (bitvector redesign /
        bitvector redesign + unary path compression); the suffix "-1" means we
        enforce one recursion (see \secref{experiments}). Lower and to the left
        is better.}}
    \label{fig:intro-results-summary}
\end{figure}

%% file: prelim.tex
\section{Preliminaries}\label{sec:preliminaries}

We introduce the \rankop/\selectop primitives on \bvs and common succinct
tree-encoding schemes, review four state-of-the-art succinct tries this paper
builds upon, and summarize string compression schemes for the tail container.

\subsection{Rank/Select Primitives}

\Bvs are a fundamental building block of many succinct data structures. Succinct
tree navigation relies on the \rankop and \selectop primitives of \bvs, which
are defined as follows:
\begin{itemize}
  \item \proc{rank$_p$(i)}: return the number of pattern $p$'s in the first $i$ bits of the \bv.
  \item \proc{select$_p$(i)}: return the end of the $i$-th occurrence of pattern $p$ in the \bv.
\end{itemize}
Different \bvs support different patterns for \rankop and \selectop. Common
choices of $p$ include $1$, $0$ and $00$. For example, \proc{rank$_1$(i)}
returns the number of occurrences of ``1'' in \rev{positions} [0, i).


Index structures for \rankop and \selectop partition the \bv into fixed-sized
(e.g., 512-bit) \rev{\emph{basic blocks}} and store partial results so far for
each block from left-to-right~\cite{rank_select_1}. \rev{They also often include
  \emph{secondary indices} on smaller fixed-size blocks within each basic block
  for higher granularity and better performance.} These auxiliary structures
reduce \newmymarginpar{R2 O3/D6}\newrev{the worst-case cost} of \rankop and
\selectop to $O(1)$ with minimal (typically \rev{about} $5\%$) space overhead
\cite{surf,rank_select_0,rank_select_1,rank_select_2,rank_select_3,sds_overview,ds2i,optimized_rank,terark_bv}.

For the \rankop operation, the \defn{rank index} stores \mymarginpar{R2
  D14}\rev{\defn{samples}, or exact accumulative ranks before each block (e.g.,
  the number of 1s preceding each basic block), which enable fast exact
  responses to \rankop queries.} Answering a \rev{\proc{rank$_p$(i)} query}
involves combining the accumulative rank before the target block with the
(computed) rank in the remainder up to position \proc{i}.

\mymarginpar{Meta 1, R1 D1}\new{For example,~\figref{traditional-bv} illustrates
  a worked example of a \\ \proc{rank$_1$(12)} query on a bitvector where the rank
  index is built with block size 8. First, we look up the result of
  \proc{rank$_1$(8) = 4} in $O(1)$ time in the rank index because position 8 is
  the start of the basic block containing 12. Second, we calculate the remaining
  instances of the pattern in the range 8 - 12 in the bitvector. The combined
  result of the first and second step is 4 + 1 = 5, which is
  \proc{rank$_1$(12)}.}

\rev{The \selectop operation uses a \defn{select index} similar
  to the aforementioned rank index, but \selectop is more challenging than
  \rankop because a query may not know exactly which block is needed based on
  the input. For example, with 8-bit basic blocks, bit 12 is always in the
  second basic block, but the 12-th \emph{one bit} might be in the 100-th basic
  block!

  To support the \selectop operation, the \selectop index precomputes and stores
  samples,\mymarginpar{R2 D14} or exact results of \selectop queries at regular
  intervals (e.g., \proc{select$_p$(0)}, \proc{select$_p$(8)},
  \proc{select$_p$(16)}, \ldots). Answering a \selectop query involves finding
  the closest index entry to the left of the desired select query, then scanning
  sequentially using that as a starting point.}  \mymarginpar{Meta 1, R1
  D1}\new{For example,~\figref{traditional-bv} illustrates a worked example of a
  \proc{select$_1$(14)} query on a bitvector where the select index is built
  with block size 8. First, we look up the closest entry below, which is
  \proc{select$_1$(8)} = 17, in $O(1)$ time. Therefore, we know that the result
  of \proc{select$_1$(14)} must be to the right of \proc{select$_1$(8)}, so we
  scan left to right starting from position 17 in the bitvector until we have
  found 6 more bits (for 8 + 6 = 14 bits total).}

\begin{figure}[t]
    \centering
    \includegraphics[width=.9\columnwidth,page=4,trim=7cm 6cm 9cm 6cm,clip]{visualization.pdf}
    \caption{Examples of rank/select operations using the associated indices on
      top of a bitvector.}
    \label{fig:traditional-bv}
\end{figure}

\rev{\para{Succinct trie \bv structure} BP, LOUDS, and DFUDS are three popular
  \bv-based succinct tree-encoding schemes \cite{jacobson1989space,
    benoit2005representing}. They all use the topology bitvector to map each
  tree element (\rev{i.e.,} nodes, edges, internal nodes, leaf nodes) to a
  unique integer ID in the range of $[0, n-1]$, where $n$ is the number of such
  elements in the tree. Depending on the encoding scheme, the topology may
  support efficient tree operations \rev{(e.g., parent/child navigation)}
  \rev{via \bv operations (e.g., rank/select).} Succinct trees encoded using
  these schemes can be converted to tries by storing data associated with each
  trie edge in a separate array, which is indexed using the edge IDs~\cite{surf,
    marisa, pdt, coco_1}.

  Efficient bitvector operations require auxiliary data structures which
  separate the bitvector into two parts: the \defn{index}, which accelerates
  bitvector operations, and the \defn{bit sequence}, which stores the content of
  the bitvector (i.e. the trie topology). The index facilitates
  \rankopnospace/\selectop operations essential to succinct trie navigation.}

\subsection{LOUDS Encoding}\label{sec:louds}

Level-Order Unary Degree Sequence (LOUDS) is a
\newmymarginpar{R2 O1/D4}\newrev{classical example of} a \bv-based succinct-tree
encoding scheme~\cite{jacobson1989space}. Each tree node is encoded using the
bit sequence $1^{m}0$ (\rev{i.e.}, $m$ 1's followed by a 0), where $m$ is the
degree (\rev{i.e.}, the number of children) of the node. The encoding bits of
each node are concatenated in level order to form the encoding \bv \var{bv} of
the entire tree. With this encoding, each 1 bit corresponds to a parent-to-child
edge, whereas each 0 bit marks the end of a node. We say a position $i$ belongs
to a node $k$ if bit $i$ is part of node $k$'s encoding. The LOUDS encoding
efficiently supports the following tree operations:
\begin{itemize}
    \item \proc{NodeID(i) = bv.rank$_0$(i)}, the LOUDS ID of the node to which position $i$ belongs.
    \item \proc{EdgeID(i) = bv.rank$_1$(i)}, the LOUDS ID of the edge at position $i$ (Requirement: \proc{bv[i] == 1}).
    \item \proc{LeafID(i) = \rev{bv}\mymarginpar{Meta 5, R2 D17}.rank$_{00}$(i)}, the LOUDS ID of the leaf node to which position $i$ belongs.
    \item \proc{InnerID(i) = NodeID(i) - LeafID(i)}, the LOUDS ID of the inner node to which position $i$ belongs.
    \item \proc{Child(i)} = \proc{bv.select$_0$(bv.rank$_1$(i+1)) + 1}, the position of the child node connected to the edge at position $i$.
    \item \proc{Parent(i) = bv.select$_1$(bv.rank$_0$(i))}, the position of the parent of the node to which position $i$ belongs.
\end{itemize}

As illustrated in~\figref{louds-example}, encoding a trie using LOUDS involves
the following steps: First, the \emph{topology} of the trie is encoded using the
LOUDS \bv; Second, all the trie labels are concatenated in level order into a
separate array. At query time, trie labels are uniquely indexed by edge IDs
(because they are associated with trie edges), whereas keys are indexed by leaf
IDs.

\mymarginpar{Meta 1, R1 D1}\rev{Next, we will present an example of how to
  traverse a LOUDS-encoded trie using the trie
  in~\figref{louds-example}. Suppose we are navigating to the second child of
  the root (i.e., from node 0 to node 2). We need to follow the second edge of
  the root node, which is at position 1. From the aforementioned formula,
  \proc{Child(1)} = \proc{bv.select$_0$(bv.rank$_1$(2))+1} =
  \proc{bv.select$_0$(2)+1} = 6. Indeed, if we look at the LOUDS bitvector
  in~\figref{louds-example}, the second child of the root starts at position 6.}

Balanced Parentheses (BP)~\cite{jacobson1989space} and Depth-First Unary Degree
Sequence (DFUDS)~\cite{benoit2005representing} are two popular alternatives to
LOUDS with the same two-bits-per-node overhead, but they traverse the tree in
different orders. Compared with LOUDS, they \rev{support}\mymarginpar{Meta 5, R1
  D9} a broader range of tree operations, including computing subtree sizes or
counting the number of leaf nodes to the left/right of a specific node. However,
\rev{their functionality} comes at the cost of slower parent/child navigation
\rev{compared to LOUDS due to their greater} complexity.

\begin{figure}[t]
    \centering
    \includegraphics[width=.48\textwidth,page=1,trim=7cm 5cm 7cm 4.5cm,clip]{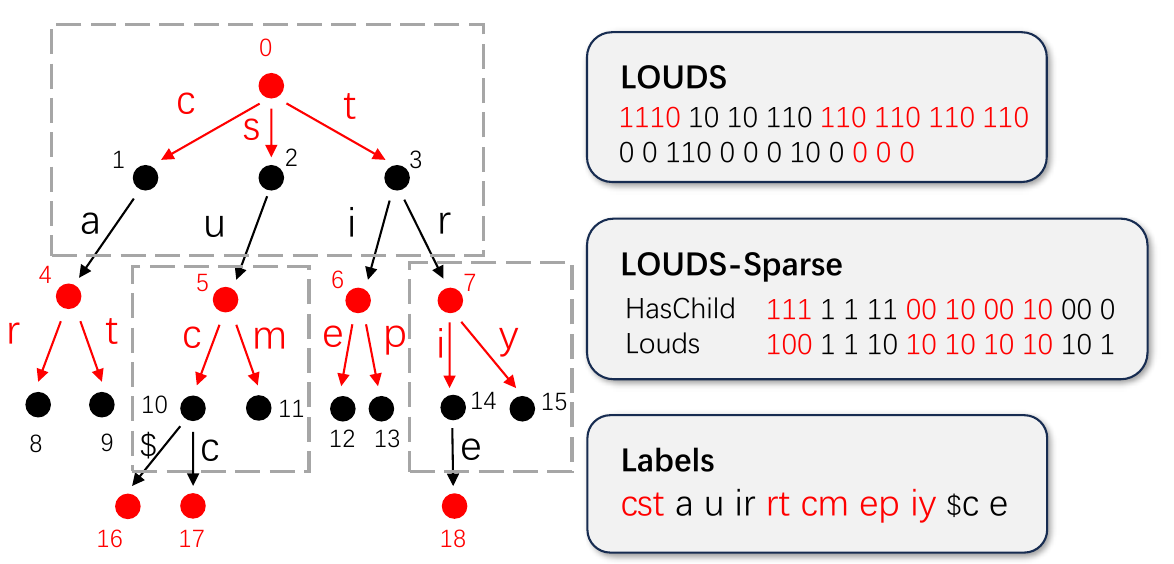}

    \caption{LOUDS/LOUDS-Sparse encoding example. Stored keys: car, cat, suc, succ, sum, tie, tip, trie, try.}
    \label{fig:louds-example}
  \end{figure}

\subsection{State-of-the-art Succinct Tries}\label{sec:existing-succinct-tries}

\para{Fast Succinct Trie} The Fast Succinct Trie (FST)~\cite{surf} is a static
succinct trie that combines the benefits of LOUDS-Sparse and LOUDS-Dense by
leveraging both formats in different levels of the trie. Specifically, FST
encodes the top levels of the trie, where nodes tend to have more branches and
are more likely to be queried, using LOUDS-Dense, a representation that is fast
and space efficient for nodes with dense branches. The rest of the trie, which
tends to be colder and of lower average degree, is encoded with the more compact
LOUDS-Sparse. \new{Due to space limitations, we omit the details of
  LOUDS-Dense, 
  but we refer the reader to the original FST paper~\cite{surf} for the full
  details.}

LOUDS-Sparse uses 2 bits to encode each trie edge. The \\\proc{HasChild} bit of
each edge indicates if this edge points to an internal node (1) or a leaf node
(0). The \proc{Louds} bit of each edge indicates if this edge points to the
first child of current node (1) or not (0), which is used to determine node
boundaries. These two sets of encoding bits are respectively concatenated in
level order to form two \bvs, \proc{HasChild} and \proc{Louds}. With
LOUDS-Sparse, we have
\begin{itemize}
    \item \proc{Child(i) = Louds.select$_1$(HasChild.rank$_1$(i+1)+1)}.
    \item \proc{Parent(i) = HasChild.select$_1$(Louds.rank$_1$(i+1)-1)}.
    \item \proc{LeafID(i) = HasChild.rank$_0$(i)}.
\end{itemize}

\rev{Let us consider how to perform trie navigation in a LOUDS-sparse-encoded
  trie using the example in~\figref{louds-example}. Suppose we are navigating
  again to the second child of the root (also at position 1). \mymarginpar{Meta
    1, R1 W1, R1 D1}From above, \proc{Child(1) =
    Louds.select$_1$(HasChild.rank$_1$(2)\\+1) = Louds.select$_1$(3)} = 4. Indeed,
  the second child of the root starts at position 4 in the \proc{Louds}
  bitvector in the LOUDS-Sparse encoding.}

\begin{figure}[t]
    \centering
    \includegraphics[width=\columnwidth,page=2,trim=7cm 5cm 9cm 3.5cm,clip]{visualization.pdf}
    \caption{An example of a CoCo-trie obtained by collapsing sub-tries
      surrounded by dashed boxes in~\figref{louds-example}.}
    \label{fig:coco-example}
\end{figure}

\para{CoCo-Trie} The CoCo-trie \cite{coco_0, coco_1} is a recent
state-of-the-art static succinct trie that achieves both low memory consumption
and fast query performance. A CoCo-trie is constructed by first creating a
regular trie $T$ and then adaptively selecting sub-tries of $T$ to collapse into
macro-nodes. Keys in each collapsed macro-node are transformed to an increasing
sequence of integer codes which is then encoded using a select pool of succinct
integer encoding algorithms. The optimal set of sub-tries to collapse is
computed using a \rev{bottom-up} dynamic-programming
algorithm. 

\mymarginpar{Meta 1, R1 W1, R1 D1}\rev{~\figref{coco-example} illustrates the
  CoCo-trie encoding of the example trie in~\figref{louds-example}. Since the
  CoCo-trie uses LOUDS encoding, its trie navigation is done through
  \rankop/\selectop operations on the bitvector as described in~\secref{louds}.}

\para{Marisa Trie}\label{sec:marisa} The Marisa trie~\cite{marisa} is a
LOUDS-encoded Patricia trie~\cite{partricia} that contracts unary paths into
single concatenated labels. Unlike Patricia, Marisa improves space efficiency by
recursively storing unary paths in multiple tries.

\new{The upper subgraph of \figref{marisa-example} shows an example of Patricia,
  where unary paths are contracted into a single node. To convert the trie to
  LOUDS succinct form, two bitvectors are needed: the topology bitvector
  \texttt{LOUDS} (\secref{louds}), and an additional bitvector (\texttt{IsLink})
  which indicates if each edge is a regular edge (0) or a contracted edge
  (1). Each contracted edge is uniquely indexed by a \var{LinkID} determined by
  a \rankop operation on the \var{IsLink} \bv. The \texttt{Links} vector stores
  \defn{links}, or references, to the concatenated labels on the contracted
  edge. For example, bit 3 in \texttt{LOUDS} corresponds to bit 2 in
  \texttt{IsLink} and link 0 in \texttt{Links} (``omp'').}

\mymarginpar{R1 D7}\new{Marisa improves on Patricia by recursively storing unary
  paths in secondary tries.} The recursion \rev{continues} until the number of
tries reaches a \rev{preset limit}, at which point the outstanding unary paths
are sorted, deduplicated and moved to a \rev{simple ``sorted'' tail container
  (\secref{text-compression}).} \new{The lower subgraph of
  \figref{marisa-example} shows a worked example of a Marisa trie with one
  recursion. For readability, links between tries are expressed as leaf node IDs
  in this example. Because the second trie only needs to support random access
  but not lookup operations, the keys are stored in reversed form so that they
  can be efficiently retrieved by reading the labels on a bottom-up path. For
  example, to retrieve link 4's corresponding key, we start from leaf 4 of the
  second trie, concatenate all the (reversed) labels on the leaf-to-root path
  (`o' and `mp'), and obtain `omp'. In practice, links are implemented as leaf
  node positions to avoid an additional \selectop operation. Finally, to
  mitigate the cost of link tracing during lookups,
  Marisa uses a small piece of memory to cache frequently-taken paths.} 

\begin{figure}[t]
    \centering
    \includegraphics[width=\columnwidth,page=11,trim=8.2cm 5.4cm 10cm 3.9cm,clip]{visualization.pdf}
    \caption{
      \rev{Examples of Patricia (top) and Marisa (bottom) with one
        recursion. Keys stored: cache, camp, compare, compute. Unary paths in
        the first Marisa trie are stored in the second Marisa trie in reversed
        form: ehc, era, etu, pm, pmo. Links in the first Marisa trie are
        indicated using dashed arrows, and the numbers alongside them indicate
        their values (which in this example are leaf node IDs in the second
        Marisa trie).}}
    \label{fig:marisa-example}
\end{figure}

\new{The number of recursions in Marisa exposes a \emph{space-time tradeoff}, as
  more recursions further compress the data at the cost of degraded search
  performance and higher build times~\cite{marisa-code}.}

\para{Path Decomposed Trie}
The Path Decomposed Trie (PDT) \cite{pdt, pdt_theory, dynamic_pdt} is a
DFUDS-encoded succinct trie. A PDT is obtained by transforming a regular trie
through path decomposition \cite{path_decomp}: At each step, a root-to-leaf path
is selected and contracted into a single node, and the subtrees dangling on the
original path are converted to its child nodes. The procedure \rev{recurses} 
 on the dangling subtrees.

 The PDT achieves both significant memory reduction and competitive query
 performance by compressing the contracted paths with an approximate version of
 the \defn{\repair} \mymarginpar{Meta 1, R1 W1}\rev{text compression
   algorithm~\cite{larsson2000off, repair}, which we will summarize in the next
   subsection.}

\new{\subsection{Text Compression
     Algorithms}\label{sec:text-compression} \mymarginpar{R1
     D7}\figref{repair-example} presents a worked example of how to compress a
   set of strings with the sorted tail container, \repair algorithm, and
   \fsst. Next, we summarize these algorithms.

  \para{Sorted tail container} \mymarginpar{Meta 1}The sorted tail container is
  the string container for the last Marisa trie on a recursion chain. \rev{It
    overlaps two keys if one is a suffix of the other, and is particularly space
    efficient for datasets with short and repetitive keys. To efficiently detect
    possibilities of overlapping, it needs to reverse and sort all keys. Once
    the strings have been sorted, construction takes linear time.} \new{The link
    array which is used to retrieve keys from the container should also preserve
    the original key ordering.}

  \new{\para{Dictionary compression} Dictionary compression encodes common text patterns as small integer
    codes, exemplified by the famous LZW algorithm~\cite{lzw}. The mapping
    between patterns and integers is known as the \defn{dictionary}, and finding
    a good dictionary is known to be a key challenge to such
    algorithms~\cite{fsst}. In our work, we focus on two related algorithms that
    support fast random access: re-pair~\cite{larsson2000off} and
    FSST~\cite{fsst}.}

  \para{Exact and approximate \repair} At a high level, the original \repair
  algorithm~\cite{larsson2000off} replaces the most frequently-occurring
  character pair with a new special character, proceeding in rounds. \rev{For
    example, using special characters $\alpha$, $\beta$, $\gamma$ and $\delta$,
    \repair may compress the string aabbaabb as $\alpha$bb$\alpha$bb ($\alpha$ =
    aa) in the first pass, $\alpha\beta\alpha\beta$ ($\beta$ = bb) in the second
    pass, $\gamma\gamma$ ($\gamma = \alpha\beta$) in the third pass and $\delta$
    ($\delta = \gamma\gamma$) in the fourth pass.} \new{The dictionary is
    expressed as a set of recursive rules of pairing, which can be flattened to
    support constant-time decoding~\cite{pdt}, with a small space
    overhead. \mymarginpar{R1 D4}Prior work shows that \repair compresses a
    sequence $T$ of length $n$ over an alphabet of size $\sigma$ and $k$-th
    order entropy $H_k$ to $O(H_k)$ bits.~\cite{navarro2008re,
      ochoa2018repair}.}

\begin{figure}[t]
    \centering
        \includegraphics[width=\columnwidth,page=14,trim=7.5cm 11.5cm 11.5cm 2cm,clip]{visualization.pdf}
        \caption{
        \rev{Examples of re-pair, FSST and sorted tail containers. Re-pair and FSST are both dictionary encoding methods (denoted with \texttt{Dict}) that replaces common text patterns with small integer codes. The sorted tail container (denoted with \texttt{Sorted}) overlaps suffix keys.}}
        \label{fig:repair-example}
  \end{figure}


  In practice, the PDT uses an \emph{approximate} version of the \repair
   algorithm~\cite{repair} that accelerates build time by orders of magnitude compared to exact \repair by identifying and
   replacing the top $k$ 
   most frequently-seen character pairs in
   each round, rather than one at a time. To our knowledge, there is no
   theoretical bound on how far the compression ratio of approximate
   \repair is from exact \repair's, as it depends on the dataset.

   \para{Fast Static Symbol Table} The Fast Static
   Symbol \\Table (\fsst)~\cite{fsst} is similar to approximate \repair, but
   optimizes for \emph{lightweight} compression (with linear build
   times). Specifically, it restricts the size of the symbol table (i.e., the
   list of special characters for replacement) to 256 entries and expands the
   size of each replaced entry to substrings of 8 bytes (rather than two
   characters in the original approximate \repair). \fsst further optimizes for
   build time by first encoding
   a small sample (about 16KB) of the original dataset, and then using the
   resulting symbol table to encode the entire dataset. FSST implements
   hardware-specific optimizations, achieving extremely high compression and
   decompression throughput. In practice, \fsst reduces the compression time by
   over an order of magnitude compared to approximate \repair, while achieving
   similar compression ratios. However, \newmymarginpar{R2 O1/D4}\newrev{no
     theoretical space bound has been proved for FSST}.}

%% file: bv-design.tex
\section{The First ``C'': Cache-conscious \Bv Design} \label{sec:bv-design}

In this section, we present the first "C" of \csquared:
\underline{C}ache-conscious bitvector design (denoted with \bvdesign). First,
inspired by prior work\\~\cite{optimized_rank, terark_bv}, we apply an
array-of-structs reorganization to bitvectors to better exploit the access
pattern of \rankop queries. Second, we eliminate cache misses from reading the
\selectop index using our novel \emph{functional index}. Third, to handle both
sparse and dense patterns, \rev{we further optimize the existing technique of
  overflowing \selectop indexes and reduce cache misses by storing all data
  in-place}. Finally, we apply the above techniques to three state-of-the-art
\mymarginpar{Meta 4}\rev{LOUDS-based} succinct tries~\cite{surf, coco_0,
  marisa}.

\rev{We adopt two design principles: (1) optimization priority of
  cache $>$ branch $>$ arithmetic operations~\cite{rank_select_1}, and (2)
  trading space for performance. Since trie topologies are usually much smaller
  than trie labels, a small space overhead is worthwhile for increased query
  speed.}

\para{Locality issues in \bvs} \newmymarginpar{Meta 1, R1 O2}\newrev{The main
  challenge to locality during trie navigation arises from storing the index and
  \bv separately~\cite{rank_select_0, rank_select_1, rank_select_2,
    rank_select_3, sds_overview, ds2i, surf}, since both must be accessed during
  \rankop and \selectop. Packing the \rankop index with the bit sequence
  improves locality for \rankop~\cite{optimized_rank, terark_bv}, but the same
  approach does not work for \selectop. In most cases, the input to \rankop maps
  linearly to a position in the bit sequence; \newmymarginpar{R2 O3/D6}the input
  to \selectop, however, is often an intermediate value (e.g.\ the output of
  \rankop) with no straightforward mapping to a bit position.}

\begin{figure}[t]
    \centering
    \begin{minipage}[h]{.19\textwidth}
    \includegraphics[width=\textwidth,page=5,trim=13cm 6.5cm 14cm 6cm,clip]{visualization.pdf}
        \caption{Interleaving.}
        \label{fig:interleaving}
    \end{minipage}
    \hspace{0.1cm}
    \begin{minipage}[h]{.26\textwidth}
        \includegraphics[width=\textwidth,page=7,trim=12cm 5cm 10cm 6cm,clip]{visualization.pdf}
        \caption{Select index overflow.}
        \label{fig:overflow}
    \end{minipage}
\end{figure}

\subsection{Array-of-Struct Reordering}\label{sec:interleaving}

Most traditional \bvs are organized in a struct-of-array manner. During \rankop
queries, each query accesses the target block immediately after its \rankop
index. To better exploit this locality, \rankop indexes can be \mymarginpar{Meta
  5, R2 D13}\rev{\defn{interleaved}, or stored in an array-of-structs manner
  with the index and bitvector in one contiguous memory allocation, as
  illustrated in \figref{interleaving}.}

Similarly, for tries encoded with multiple \bvs, we interleave blocks of
different \bvs that are often jointly accessed at nearby bit positions. For
example, in LOUDS-Sparse, we pack the blocks of \haschild and \louds, as they
are aligned with each other and are typically accessed together. \rev{Similarly,
  we can interleave secondary \rankop indexes~\cite{optimized_rank,
    terark_bv, sds_overview, rank_select_3, marisa}.} 

We restrict the size of each \rankop index \rev{element} to 32 bits \rev{for
  better space efficiency without loss of applicability to larger datasets}. To
support \bvs larger than $2^{32}$ bits, previous work shows how to first
\emph{partition} the data into \rev{connected} sub-tries that individually fit
in the size limit~\cite{hybrid_surf}.

\subsection{Functional Index}\label{sec:functional-index}
  \begin{figure}[t]
    \centering
    \includegraphics[width=\columnwidth,page=13,trim=3cm 8.4cm 8cm 3.5cm,clip]{visualization.pdf}
    \caption{ \rev{Functional index example for LOUDS-Sparse's \texttt{Child(x)
          = Louds.select}$_1$\texttt{(HasChild.rank}$_1$\texttt{(x + 1) +
          1)}. The block size and the sampling rate for \selectop (left) are
        both 5. The third row on the right (\texttt{Child}) caches the results
        of \texttt{Child(x)} at the start of block (i.e. \texttt{Child(0)},
        \texttt{Child(5), etc.}).}}
    \label{fig:functional-index}
\end{figure}

To improve spatial locality for \selectop, we introduce the \defn{functional
  index} \mymarginpar{R2 D14}\rev{that samples results of the navigation
  function (e.g., \texttt{Child(x)}) rather than intermediate \selectop values.}
For example, to support child navigation in LOUDS (where
$\texttt{Child(x)} =
\selectopnospace_0\texttt{(}\rankopnospace_1\texttt{(x+1))+1}$), we sample
\texttt{Child(x)}, instead of \\$\selectopnospace_0$\texttt{(x)}, at regular intervals of
$x$. Since the sampled function is monotonically non-decreasing, the sampling is
unambiguous. \rev{We then interleave these samples with blocks for spatial
  locality.} Other succinct trie operations (e.g., parent navigation) can be
optimized using similar approaches.

\mymarginpar{R1 D2}\new{\figref{functional-index} shows a worked example of
  computing \texttt{Child(6)} using traditional and functional indexes. An
  implementation using the traditional index first computes
  \texttt{HasChild.\rankopnospace$_1$(6+1)=7} (by accessing \texttt{HasChild}'s
  rank index and the \texttt{HasChild} bitvector) and then obtains
  \texttt{Louds.select$_1$(7+1)=13} (by accessing \texttt{Louds}' select index
  and the \texttt{Louds} bitvector itself), which easily incurs 4 cache
  misses. In contrast, using the cache-optimized functional index, we first
  access block 1 (which bit 6 falls into) and obtain
  \texttt{HasChild}.\rankopnospace$_1$(6+1) = 7 (using block 1's rank
  index). Then, we read block 1's functional index for \texttt{Child}, which
  stores the value of \texttt{Louds}.\selectopnospace$_1$(5+1) = 9. Since we
  need to compute \selectopnospace$_1$(7+1), we select two more ones starting
  from bit 9 and end up at bit position 13.

   Assuming the two accessed blocks belong to different cache lines, we incur
   only two cache misses in total.} \rev{In practice, since} trie navigation
 operations are often executed in a pipelined manner (\textit{i.e.}, the output
 of one operation is immediately used as the input \rev{of} the next operation),
 this optimization essentially cuts cache misses by $4\times$ in long pipelines.

 \newrev{\para{Applicability beyond LOUDS}
   Although we develop and evaluate the functional index in the context of
   LOUDS-based tries, the core idea applies to any succinct-tree encoding whose
   navigation functions are monotonically non-decreasing compositions of \rankop
   and \selectop. The key insight behind the functional index is that
   \texttt{Child(x)} is a function of the position \texttt{x}, not of an
   intermediate bitvector value, so sampling it at regular intervals of
   \texttt{x} aligns the index with the bit sequence and enables
   interleaving. This property holds in BP and DFUDS as well, since the
   navigation function's input and output both increase monotonically with the
   traversal order of the encoding.

\newmymarginpar{R2 O4/D7, R3 O3/D3}
  In BP, child navigation reduces to a \texttt{findclose} operation (i.e.,
  locating the matching close parens for an open parens),
  which is implemented via \rankop/\selectop on the parenthesis \bv. The
  functional index directly applies because \texttt{findclose(x)} is
  monotonically non-decreasing in \texttt{x}, so it can be directly sampled at
  regular intervals of \texttt{x} and interleaved with the bit
  blocks. Similarly, DFUDS navigation starts with the \texttt{Child(x, 1)}
  operation, which returns the first child of the node at position
  \texttt{x}. This first-child operation admits a functional index in
  the standard way, with subsequent siblings located by scanning forward.
}

\subsection{Overflow of Select Index}\label{sec:overflow}

So far, we have assumed that the probe distance within the target interval is
small. But this is not always the case, especially for the CoCo-trie, which
sometimes contains nodes with very large fan-outs. For instance, when encoding
the \texttt{trec-terms}~\cite{trec_dataset} dataset in the original CoCo-trie
paper, the first 600k bits of the CoCo-trie topology contain only 36
\rev{zero}\mymarginpar{Meta 5, R2 D17} bits. With such extreme sparsity, regular
searching algorithms would be too slow.

A classic solution to this problem is to allow the \selectop index to overflow
\cite{ds2i, rank_select_3}. That is, the \selectop index should detect regions
with extremely low pattern density, and precompute every \selectop result in
those regions. Since the precomputed results do not fit in regular sample space,
they are overflowed to a separate list, and the sample stores a pointer to the
overflow list instead.

\rev{Unfortunately, overflowing the select index disrupts locality because two
  adjacent samples\mymarginpar{R2 D16} are no longer guaranteed to form a
  bounding interval, as either of them can be overflowed. Therefore,} after
querying the sample, we must either use linear search or incur an additional
cache miss to read the spill list.

We resolve this issue by storing \emph{block indexes} instead of exact bit
positions in regular samples, as this frees up additional bits to store the size
of the bounding interval \rev{(in blocks)}. The \rankop indexes in each block
provide enough information to restore any lost precision at no cost of
additional cache misses. \new{For example, in \figref{functional-index}, block
  1's functional index for \texttt{Child} (i.e. \texttt{Child(5)}) points to
  block 1, which is equivalent to bit position 5 (instead of its original value
  9 in~\figref{functional-index}). Using the inlined rank index, we immediately
  restore \texttt{HasChild.\rankopnospace$_1$(5) = 5}. Since \texttt{Child(5) =
    Louds.\selectopnospace$_1$(7)}, we can reach position \texttt{Child(5)}
  (i.e., 9) by selecting two more ones.}

As shown in \figref{overflow}, each sample occupies 32 bits, where bit 31
indicates whether the sample overflows. If bit 31 is false, then bits 7 to
30 (the 24-bit \head, which means block size must be at least 256) point to the
lower bounding block, whereas bits 0 to 6 (the 7-bit \dist) \rev{store the
  length of the bounding interval (in blocks).}  If \dist does not fit in 7 bits
(\textit{i.e.}, the bounding interval is at least 128 blocks long), then bit 31
is set to true, and the sample is overflowed. The rest of the sample (the 31-bit
\spillptr) \rev{indexes} into the centralized overflow list \spilllist, where
\texttt{\spilllistnospace[\spillptr + i]} is the position of the $i$-th target pattern
in the overflowing interval. \rev{We apply this optimization to} both traditional \selectop indexes and functional indexes with intermediate selects.

We sketch why \spillptr always fits in 31 bits. With block size $B=256$ and
sample rate $S=256$, each overflowing interval has target pattern density at
most $S/128B = 1/128$ (for \selectop indexes) or $B/128B = 1/128$ (for
functional indexes). Since the \bv size does not exceed $2^{32}$, the size of
\spilllist would not exceed $2^{32}/128 = 2^{25}$ and hence \rev{always fits} in
26 bits.

\mymarginpar{R3 O1}\new{Finally, we address the space overhead incurred by the
  spill list. Because each spilled interval contains at most $B = 256$ bits, the
  density of ones is at least 256/(256$ \times$ 128) = 1/128, and the associated
  spill list occupies at most 32/128 = 25\% of the bitvector size. For
  LOUDS-encoded tries, since only half of the bits are zeros, the bound can be
  reduced to 25\% $\times$ 128/127/2 = 12.6\%. For LOUDS-Sparse encoded tries,
  the functional index is sampled on only one of the two bitvectors, which also
  tightens the bound to 12.5\%. Both upper bounds are reached only under extreme
  conditions where most trie edges are concentrated in a few large nodes, which
  is unlikely in practical datasets.}

\subsection{Applying the Rules}

\begin{figure}[t]
    \centering
    \includegraphics[width=\columnwidth]{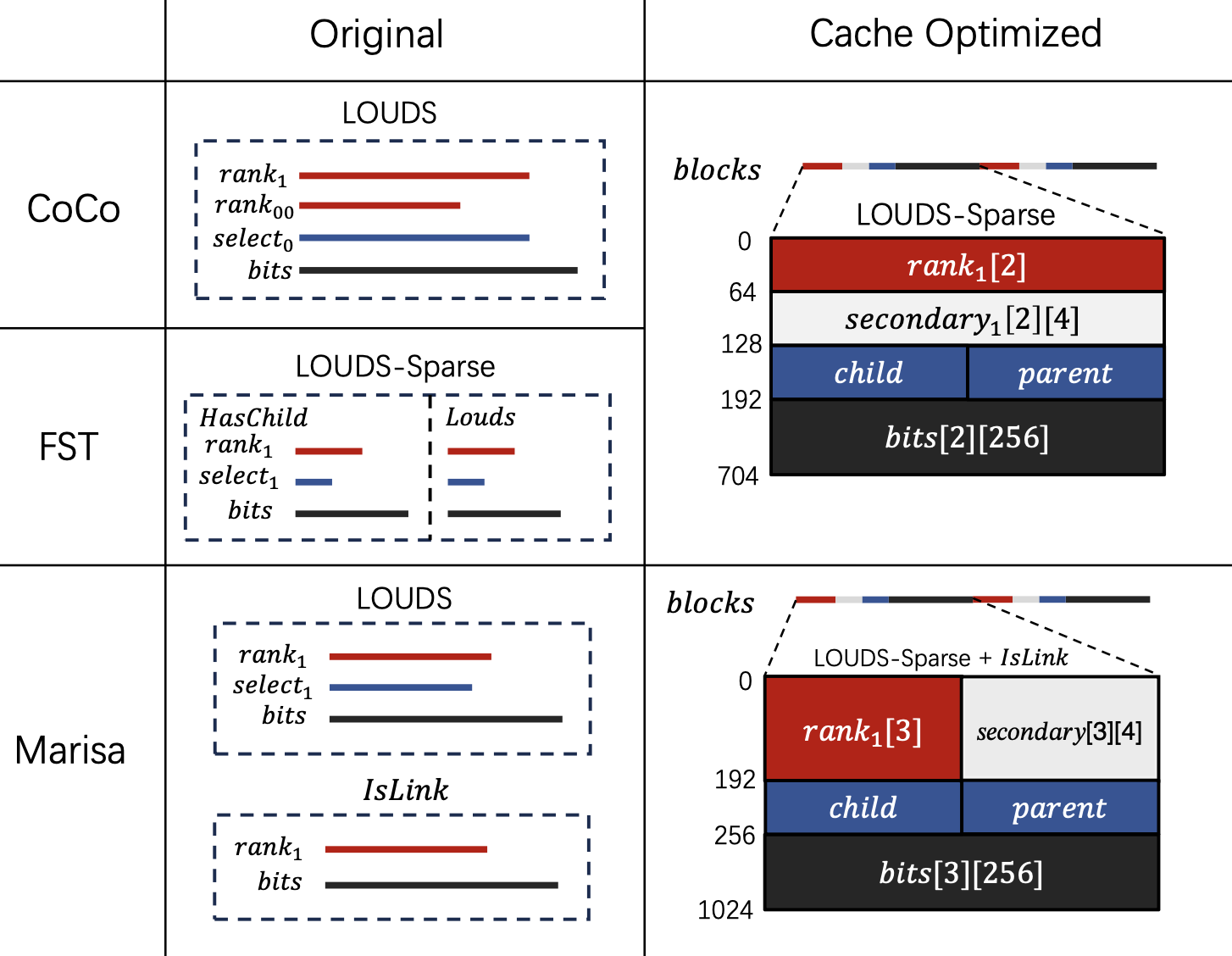}
    \caption{Optimized \bv layout. \rev{\child: functional index for
        \var{Child(x)}. \parent: functional index for
        \var{Parent(x)}. \texttt{secondary}$_1$: secondary index for
        \rankopnospace$_1$.  }}
    \label{fig:bv-design}
  \end{figure}

We apply the above optimizations to FST, CoCo-trie, and Marisa;
\figref{bv-design} shows the resulting \bv layouts.

The FST is encoded using LOUDS-Sparse and only needs to support existence
query. We pack the blocks of \haschild and \louds (\secref{interleaving}) and
interleave bits with the \rankopnospace$_1$ (\secref{interleaving}) and \child
(\secreftwo{functional-index}{overflow}) indexes.

\rev{The CoCo-trie is encoded using standard LOUDS and supports top-down lookup,
  and therefore an ideal approach is to inline (primary and secondary)
  \rankopnospace$_1$ and \rankopnospace$_{00}$ indexes (\secref{interleaving})
  and the \child (\secref{functional-index}, \secref{overflow}) functional
  index. However, this adds a 192-bit space overhead to every 256-bit block,
  which is too expensive. Therefore, we switch to LOUDS-Sparse encoding which
  incurs less index overhead.}

The case with Marisa is \rev{more}\mymarginpar{R2 D15} involved. First, we
change the encoding scheme from LOUDS to LOUDS-Sparse, as this allows for better
alignment. Then, we pack the blocks of \haschild, \louds and \islink, as all
three \bvs are now edge-aligned and are typically accessed with high locality
(\secref{interleaving}). Finally, we interleave the bits with indexes for \child
and \parent(\secref{functional-index}, \secref{overflow}), and each \bv's
\proc{rank}$_1$ (\secref{interleaving}) indexes, as the Marisa trie needs to
support both \rev{forward and reverse lookups}. 

\new{\para{Theoretical analysis} The $C_1$ optimizations preserve asymptotic
  bounds on space and query complexity. The rank and select indexes are
  lower-order terms relative to the bitvector size~\cite{surf}. \mymarginpar{R1 W2, R1 D4}}

\new{\begin{lemma} Given a \bv of size $S$ and a block size of $B$, the $C_1$ optimizations incur at most $S/2B$ additional space overhead.
\end{lemma}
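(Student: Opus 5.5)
The plan is to count bits rather than asymptotics. I will charge the extra storage introduced by \bvdesign to the $S/B$ basic blocks of the \bv and show that each block carries at most $B/2$ additional bits. Equivalently, I will compare the \bvdesign layout with a conventional separated-index layout that supports the same operations, and show the net overhead per block is bounded.

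The key steps are as follows.

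\begin{enumerate}
\item I will observe that the array-of-structs reordering of \secref{interleaving} only moves existing rank samples next to their blocks. It adds no bits beyond possible alignment padding, and padding is absorbed because each index element is a fixed 32-bit word and each block is a multiple of 256 bits.
\item The functional index of \secref{functional-index} stores one 32-bit sample per block for each sampled navigation function. These samples \emph{replace} the traditional \selectop samples, which a conventional bitvector stores at rate $S_{\mathrm{sel}}$ of pattern occurrences. So the difference is at most one 32-bit word per block per function.
\item For the case of the overflow list of \secref{overflow}, I will reuse the density argument given there. A spilled interval contains at most $B$ bits across at least $128$ blocks, so it contributes at most $32$ bits of \spilllist per $128$ blocks. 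This is again amortized to less than one word per block.
\end{enumerate}

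Summing these gives at most a constant number $c$ of 32-bit words per block, hence overhead $32c\cdot S/B$. I then need to relate this to the claimed $S/2B$. This is the step I expect to be the main obstacle, because the stated bound $S/2B$ must be interpreted correctly. Read literally in bits it is smaller than one bit per block, which cannot hold if a 32-bit sample is added. So I will read it in units consistent with the paper's word-sized index entries: one extra 32-bit functional sample per $B$-bit block, with samples shared between the two interleaved LOUDS-Sparse \bvs, whose combined size is $S$. Since each block of size $B$ spans $B$ bits of each of the two \bvs, there are $S/2B$ block positions, and thus at most $S/2B$ additional sample words.

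I will make the accounting precise by fixing that $S$ counts the bits of all interleaved \bvs, and that blocks are edge-aligned across them. Each block then corresponds to $2B$ bits of $S$, yielding $S/(2B)$ blocks and one extra word each. Finally I will check the Marisa layout, which has three interleaved \bvs and two functional indexes (\child and \parent). There the count is $S/3B$ blocks times two extra words, giving $2S/3B$ words, which exceeds $S/2B$, so Marisa is a real case to resolve rather than a routine check. To close this, I would first argue that the \parent samples replace Marisa's existing $\selectopnospace_1$ index outright, and hence add no net words. Only if that argument fails would I state the bound per sampled function.
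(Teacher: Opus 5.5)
You are right that the bound must be read in 32-bit index entries rather than bits (the paper's proof counts ``elements''), and your step 1 matches the paper's observation that the rank index is unchanged. The gap is in how you obtain the factor $\tfrac12$. You redefine $S$ as the combined length of two interleaved LOUDS-Sparse \bvs so that there are only $S/2B$ blocks. Under the statement's natural reading, a \bv of size $S$ cut into $S/B$ blocks, your bound of one extra word per block gives $S/B$, twice the claim.

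The missing idea is to use the replacement you mention in step 2 rather than discard it. A traditional $\selectopnospace_1$ index samples every $B$-th one. Since only about half the bits of a trie encoding are ones, it holds about $S/2B$ entries, while the \texttt{Child} functional index holds one entry per block, i.e.\ $S/B$. The net increase is $S/B - S/2B = S/2B$ entries, and that is the paper's whole argument. The same computation shows your proposed Marisa fix is false. A \texttt{Parent} functional index cannot replace the $\selectopnospace_1$ index ``outright'' at zero net cost, because one sample per block outnumbers one sample per $B$ ones whenever the density of ones is below $1$. The paper's count is per replaced \selectop index, and each one nets about $S/2B$ entries.

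Step 3 should be dropped, since its amortization is off by a factor of $B$. Every target occurrence in a spilled interval gets its own 32-bit \spilllist entry. An interval with up to $B$ occurrences spread over at least $128$ blocks therefore costs up to $32B$ bits, i.e.\ up to $B/4$ bits (two words at $B=256$) per block, not $32$ bits per $128$ blocks. This is why \secref{overflow} bounds the spill list at $12.5$--$25\%$ of the \bv, which in the worst case already exceeds $S/2B$ words. The paper's lemma does not charge the spill list at all; it is bounded separately and argued to be reached only in extreme cases. A proof that folds it into the $S/2B$ budget therefore cannot close. Restrict the accounting to the functional index versus the \selectop index it replaces.
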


\begin{proof}
  The size of the rank index in $C_1$ is unchanged and takes 32 bits per
  block. 
  Since only half of the bits in a trie encoding are 1, the \selectopnospace$_1$
  index contains about $S/2B$ elements, whereas the \texttt{Child} functional
  index contains $S/B$ elements.
\end{proof}
} \rev{Although functional indexes use more space compared to traditional
  \selectop indexes,\mymarginpar{R2 D16} the overall space overhead is small
  relative to the index size and label data. For example, when $B=256$, the
  functional index takes 12.5\% extra space relative to the \bv size compared to
  the \selectop index.}


\begin{lemma}
  A child navigation takes at most 4 cache misses in all of the
  \bvdesign-optimized tries in this paper.
\end{lemma}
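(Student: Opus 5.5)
We bound the misses by following one \texttt{Child(x)} call through the \bvdesign layout. All three tries (FST, CoCo, Marisa) now use LOUDS-Sparse, packed and interleaved as in \figref{bv-design}, so a single argument covers them. We assume that one block, together with its inlined rank index and functional index samples, occupies a bounded number of cache lines. For $B=256$ this is the bits of \haschild, \louds (and \islink for Marisa) plus a few 32-bit samples. We charge at most one miss per distinct block touched; if a block straddles a line boundary we treat it as one logical fetch.

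The common case follows the worked example of \secref{functional-index}. First, we read the block containing position \texttt{x}. From its inlined \rankopnospace$_1$ sample and the \haschild bits we get \texttt{HasChild.rank}$_1$\texttt{(x+1)}. From the same block's \child functional index we get the sample \texttt{Child} at the block start. Under the \secref{overflow} encoding this sample is a \head block index plus a \dist. This accounts for one miss.

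Second, we jump to the block named by \head. Its inlined rank index restores the exact bit position at no extra cost, as argued in \secref{overflow}, and we scan forward for the remaining ones in \louds. Because the sample rate equals the block size, at most $B$ ones separate the sampled value from the target. This fact is what limits how far the scan can run.

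The scan may cross into neighbouring blocks, and bounding this is the main obstacle. If \dist is small, the target lies in the bounding interval $[\head,\head+\dist]$. We locate the target block using the rank samples of the blocks in this interval, which we claim costs at most one further line in the intended regime. The final block then accounts for at most one more miss, for a total of at most 3 in the non-overflow case.

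In the overflow case (bit 31 set), we instead read \texttt{\spilllistnospace[\spillptr+i]}, which is one miss. That entry gives the exact target position, and reading the target block is one more miss. This path therefore costs at most 1 (source block) + 1 (spill list) + 1 (target block), plus at most one extra line if the source block's data and samples straddle two lines, for a total of 4.

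Taking the maximum over both cases gives the claimed bound of at most 4 cache misses. The delicate point to verify is the non-overflow scan: we must show that binary or linear search over the rank samples in $[\head,\head+\dist]$ does not add misses beyond those counted. If it can, we will count that search as the fourth miss and absorb the final block into the same line, or we will argue from density that the target lies within two adjacent blocks of \head.
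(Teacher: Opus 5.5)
Your proposal has a genuine gap. It counts the wrong unit, and the case analysis built on that count does not close.

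The lemma is about cache lines, but you charge one miss per \emph{block} and declare that a block straddling a line boundary is ``one logical fetch''. That convention is unjustified, and it is precisely what the bound depends on. The missing idea is the concrete block size of the \bvdesign layouts in \figref{bv-design}. Packing the \haschild/\louds (and \islink) bits with the inlined rank and functional-index samples gives blocks of 704, 704, and 1024 bits for \bvdesign-CoCo, \bvdesign-FST, and \bvdesign-Marisa, so each block occupies at most two 512-bit cache lines. The paper's proof is then a one-line count. A functional-index child navigation reads the input block (rank sample, \haschild bits, \child sample) and the output block (where the remaining ones of \louds are selected), giving at most $2\times 2=4$ misses. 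You assume ``a bounded number of cache lines'' per block but never quantify it, and without the value two the constant 4 has nothing to come from.

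Your extra charges also do not fit the budget once lines are counted consistently. In the non-overflow case, the fact that at most $B$ ones separate the sample from the target bounds the number of ones, not the number of blocks. \louds can be very sparse, which is exactly why \dist can reach 127 blocks before overflow. So searching the rank samples of $[\texttt{head},\texttt{head}+\texttt{dist}]$ is not bounded by one line, and you leave this step as an explicit hedge. Even granting it one line, source block, search, and final block give $2+1+2=5$. In the overflow case you add a straddling line for the source block but not for the target block, which is inconsistent. In fact the spill entry already is the value of \texttt{Child(x)}, so that path needs only the source block and one spill-list line.

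The paper does not attempt this case analysis. It takes as given that a functional-index navigation is resolved within the input and output blocks, and obtains 4 purely from the two-lines-per-block fact. To repair your argument, replace the per-block convention with the line count derived from the layouts. Then argue, or adopt as the paper does, that only those two blocks are read.
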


\begin{proof}
  Let \bvdesign-CoCo, \bvdesign-FST, and \bvdesign-Marisa denote the CoCo-trie,
  FST, and Marisa trie with the $C_1$ optimization as illustrated
  in~\figref{bv-design}.  As shown in~\figref{bv-design}, \bvdesign-CoCo,
  \bvdesign-FST, and \bvdesign-Marisa have block sizes of 704, 704, and 1024
  bits, respectively. Given a cache-line size of 64 bytes (512 bits), each block
  in the \bvdesign-optimized succinct tries takes at most 2 cache lines.  A
  child navigation with the functional index queries two blocks: the input block
  and the output
  block. 
\end{proof}

Although the blocks in the \bvdesign-optimized tries span 2 cache lines each,
the second cache line in the block is prefetched and therefore incurs no
additional cache misses. In contrast, the original succinct tries (e.g., CoCo,
FST, and Marisa), are encoded using separate \bvs, which take at least 3 random
cache misses per block.


%% file: unary-path.tex
\section{The Second ``C'': Adaptive Unary-Path Compression}\label{sec:unary-path}

In this section, we address the space inefficiency incurred by unary paths
(especially suffix unary paths) and introduce the second "C" of \csquared: the
adaptive unary-path \underline{C}ompression scheme (denoted with
\compressionscheme). \rev{The \compressionscheme optimization enables all tries
  to access unary-path compression in the data. Specifically, it combines
  Marisa's recursion~\cite{marisa} with the Fast Static Symbol Table
  (\fsst)~\cite{fsst} as the tail container. Furthermore, it \emph{adaptively}
  chooses the number of levels of recursion for the Marisa trie based on
  per-dataset space savings.

  We evaluate \csquared tries against baselines at the same recursion level;
  since CoCo-trie and FST originally lack recursion, \compressionscheme
  changes only their tail container.}

\para{Locality issues in the data} \rev{Long unary paths increase topology size
  and waste space through redundant label sequences.} \newrev{Without proper
  compression, suffix paths can account for about 80\% of succinct-trie space,
  as shown in~\tabref{unary-wiki-log}.}

\newrev{Prior work uses two orthogonal compression methods (detailed
  in~\secref{preliminaries}): PDT~\cite{pdt} uses \repair~\cite{larsson2000off},
  while Marisa~\cite{marisa} uses recursion, as detailed
  in~\secref{preliminaries}. Neither method dominates: on the
  \texttt{log}~\cite{log-dataset} dataset, recursion and \repair achieve
  compression ratios of 16.4$\times$ and 7.5$\times$, respectively; on
  \texttt{xml}~\cite{text-collection}, the ratios are 3.9$\times$ and
  5.0$\times$.} 

\para{Choice of tail container} Although Marisa originally uses a sorted
  tail container, recursion is agnostic to the tail container; more advanced containers such as approximate
  \repair~\cite{pdt} and \fsst~\cite{fsst} can improve both space and query
  time.

  We limit our \csquared tries to the \fsst tail container because it captures almost all (within \fsstvsrepair) of the compression benefits of
  approximate \repair with $12.8\times$--$21.6\times$ faster build times. 
  As
  shown in~\tabref{c2-full} in~\secref{experiments}, choosing \fsst as the tail
  container always achieves a better compression ratio (up to \fsstmaxspacesavings) with minimal impact on the query time (.96-1.13$\times$)
  over the sorted container.



\begin{figure}[t]
    \begin{minipage}[h]{.23\textwidth}
      \includegraphics[width=\textwidth,page=9,trim=14cm 7.6cm 15cm
      7.5cm,clip]{visualization.pdf}
        \caption{Suffix path compression for \csquared-FST and \csquared-CoCo.}
        \label{fig:adaptive-compression-fst-example}
    \end{minipage}
    \hspace{0.1cm}
    \begin{minipage}[h]{.23\textwidth}
      \includegraphics[width=\textwidth,page=10,trim=14cm 7.8cm 14cm
      7.2cm,clip]{visualization.pdf}
        \caption{A \csquared-CoCo suffix path compression example. Stored keys:
          cash, camp, cell, crash.} 
        \label{fig:adaptive-compression-coco-example}
    \end{minipage}
\end{figure}

\para{Integration with succinct tries} \rev{First, we enable \bvdesign-FST,
  \bvdesign-CoCo, and \bvdesign-Marisa to access recursion and compressed tail
  containers. We focus the discussion on 
  FST and CoCo, as these previously could not access unary-path compression.}

\figref{adaptive-compression-fst-example} illustrates how we containerize the
storage of suffix paths in FST. Specifically, each leaf node has an \islink bit indicating whether its
suffix path is moved to the next container. A suffix link at position $i$ can thus be uniquely identified by
\var{LinkId(i) = IsLink.rank$_1$(LeafID(i))}. When a lookup operation reaches a
leaf node, we follow the \var{IsLink} bitvector to check the remaining suffixes
in the container (if any).

The integration with \rev{\bvdesign-CoCo} is \rev{similar}\mymarginpar{R2 D15}
to \rev{\bvdesign-FST}, but \rev{CoCo requires special care for correctness: its
  encoding schemes support only exact
  lookup}.~\figref{adaptive-compression-coco-example} shows the issue. When
looking up the string ``cell,'' the CoCo-trie would search for ``cel'' in the
root \rev{because the root node has depth 3}. Since ``cel'' is not a key of the
root, the lookup would fail, even if ``cell'' exists in the trie.
Instead, we search for the lower bound of the target key rather than an exact match. If the lower bound is a prefix of the target key, we trace the link to
check the remaining suffixes.

\rev{\para{Adaptive recursion depth}} We enable all of the \rev{succinct} tries
studied in this paper to access unary-path compression, but focus on the case of
the Marisa trie and show \new{how to achieve better space-time tradeoffs by
  combining recursion with \fsst.}

Using recursion in the other tries (FST and CoCo) trades query performance for
space savings. Therefore, we limit non-Marisa tries to no recursion in our evaluation,
but expose it as a user option for different space-time tradeoffs. We will include the
data on the effect of recursion in non-Marisa tries in the full version.

\new{The original Marisa-trie exposes the maximum number of recursions as a
  parameter to the user, but offers no per-dataset guidance on the space-query
  tradeoff. As the original Marisa trie documentation notes~\cite{marisa-code}
  and \secref{experiments} confirms, recursion has a serious negative impact on
  query performance, so we only want to continue recursion when the space
  reduction is significant.

  We \emph{adaptively select} the number of recursion levels that achieve
  the best time-space tradeoff per dataset. To balance
  compression ratio against build/query time, we stop recursion when the space
  savings would be less than some $\epsilon = 0.1$\mymarginpar{R1 D3, R2 D5}
  (relative to trie size). However, the user can set $\epsilon$ to suit
  their use case.}

\mymarginpar{R1 W3, R1 D3, R2 D1, R2 D4} To efficiently determine the space
usage at each level of recursion, we adopt \fsst's fast estimation scheme,
encoding a subset of the data to approximate the compression ratio.  In
practice, this estimate is within 10\% of the true compression ratio.

\para{Other locality optimizations} \new{For the Marisa trie, compressed tail
  containers degrade lookup performance in nodes with many links. To address
  this issue, we store the branching label (i.e., the first label of each unary
  path) in the label vector, enabling in-place intra-node search accelerated by
  SIMD~\cite{pdt}.}

\mymarginpar{R1 D3, R2 D1, R2 D3, R3 O4}\rev{To further improve locality, we
  store unary paths smaller than a link in-place, avoiding the index overhead of
  moving them to the next level. The number of bits needed to store a link
  depends on the size of the trie (i.e., if a trie has $n$ nodes, we need
  $\lg(n)$ bits to store each link). Short unary paths may be further
  compressed by moving them to the next level, but the space savings is limited
  by the short path length relative to the link cost.}


%% file: experiments.tex
\section{Experimental Evaluation}\label{sec:experiments}

We evaluate three succinct tries optimized using \csquared (\csquared-FST,
\csquared-CoCo, and \csquared-Marisa) and compare them with \numtries state-of-the-art
tries: FST \cite{surf}, Marisa \cite{marisa}, CoCo-trie \cite{coco_0, coco_1},
PDT \cite{pdt}, ART \cite{art}, C-ART \cite{hybrid_index}, \newmymarginpar{Meta
  4, R2 O2}\newrev{c-trie++~\cite{tsuruta2022c}, and z-fast
  trie~\cite{belazzougui2010dynamic}}. The first four are \bv-based succinct
tries introduced in \secref{preliminaries}; ART is a well-established
pointer-based dynamic trie, C-ART is the compact and static version of ART, and
\newrev{c-trie++/z-fast trie are high-performance dynamic tries}. We compare all
data structures on query performance, space usage, and build time.

\para{Summary} The \csquared tries significantly outperform the original
versions in both performance and space efficiency on most datasets.  On average,
\csquared-FST, \csquared-CoCo, and \csquared-Marisa improve query
performance/space usage by \csqfstspeedup/\csqfstspacesaving,
\csqcocospeedup/\csqcocospacesaving, and \\
\csqmarisaspeedup/\csqmarisaspacesaving, respectively (at equal recursion).

These improvements stem from improved locality and fewer cache misses in both
the index and the data. Applying \bvdesign (the cache-conscious bitvector
redesign in~\secref{bv-design}) reduces cache misses per query by $1.51\times$
for FST and $1.26\times$ for Marisa on large datasets. The \bv redesign adds at
most $4\%$ space overhead. At equal recursion, \compressionscheme (the
unary-path compression scheme in~\secref{unary-path}) improves query
performance, space usage, or both over \bvdesign alone.

\subsection{Experimental Settings}\label{sec:experimental-setup}

\input{latexfigs/table-datasets}

\para{Hardware} \newrev{We ran the experiments on a server with an AMD EPYC 9555
  64-Core Processor running at 3.2GHz. The server has 1007 GB of memory, a 3 MiB
  L1 data cache, a 2 MiB L1 instruction cache, a 64 MiB L2 cache, and a 256 MiB
  L3 cache.} We ran the experiments with one thread.

\para{Implementation} We implement the three \csquared tries (\csquared-FST,
\csquared-CoCo and \csquared-Marisa) using C++17. We use the compressed string
pool implementation of PDT~\cite{pdt-code} \new{and the publicly-available
  implementation~\cite{fsst-code} of FSST~\cite{fsst} for the \repair and \fsst
  tail containers, respectively.} We also \rev{use} the sdsl~\cite{sds_overview,
  sdsl_lib}, ds2i~\cite{ds2i} and sux~\cite{rank_select_0} libraries for
\csquared-CoCo as in the original CoCo-trie~\cite{coco-code}.

For CoCo-trie~\cite{coco-code}, Marisa~\cite{marisa-code}, PDT~\cite{pdt-code}
and C-ART~\cite{cart-code}, we use their original implementations. We adopt the
ART implementation in the C-ART codebase. For FST, we use an optimized
third-party implementation~\cite{fst-code} which compacts suffixes into a
contiguous array, as this allows for fairer comparison with many unary suffix
paths. \newmymarginpar{Meta 4, R2 O2}\newrev{Finally, we also evaluate
  c-trie++~\cite{tsuruta2022c} and z-fast trie~\cite{belazzougui2010dynamic},
  two state-of-the-art dynamic tries. }

\newmymarginpar{Meta 3, R3 O2}\newnew{All tested systems support
  existence queries, but out of the tested \csquared tries, only FST supports an
  iterator for range queries. Therefore, we evaluate FST versus \csquared-FST on
  range queries. In each experiment, we generate range queries with a start key
  and a length, or number of subsequent keys to return beginning from the start
  key. We vary the query lengths \texttt{k} between experiments and test
  $\texttt{k} = 1, 10, \ldots, 10000$.}

The ART and C-ART implementations we use are designed as filters but not indexes
and may return false positives. To support lossless queries, the original
dataset must be replicated and may need to be checked when the filter returns
positive results. We report space costs without the replicated dataset; query
times include false-positive checks.

Finally, as detailed in~\secref{coco-prime}, we introduce CoCo', a version of
the CoCo-trie with an optimized build routine but the same \bv design, because
the open-source CoCo-trie does not build on large datasets.

All code is compiled with g++ 10.3.0 at -O3.

All implementations used in our experiments are publicly available at
\url{https://github.com/alexztc/C2}.

\para{Parameter settings} For FST, we use the default setting with $R =
64$. \rev{For CoCo-trie, CoCo', and \csquared-CoCo, we set $\alpha = 5\%$ for a
  good space-time balance.}  We do not enable LOUDS-Dense for \csquared-FST as
it provides no benefit. \new{We use Marisa's default cache size for both Marisa
  and \csquared-Marisa, which is 1/512 of the key count. We set $\epsilon = .1$
  for all \csquared tries.}

\para{Datasets} \tabref{datasets} details the \numdatasets datasets in our
evaluation.  These datasets span diverse sources, key lengths, prefix depths,
and alphabet sizes.

\rev{Furthermore, following the methodology from the CoCo-trie
  paper~\cite{coco_1}, we generate prefix-only versions of each dataset (denoted
  by \texttt{dataset*}) because CoCo originally used only prefix-only
  datasets. We compare the open-source CoCo implementation with our CoCo'
  implementation on the prefix-only datasets.}

\para{Experimental setup} We set up experiments as follows: First, each dataset
is sorted and deduplicated. Second, we build the trie under test on the dataset
and record the build time and memory usage. Finally, we query the trie using all
keys in the dataset in random order and compute the average query latency. We
omit negative-query latencies, as they are generally proportional to positive
workloads~\cite{coco_1}. We also measure the cache performance of queries in
different tries on the \logdata and \wikidata datasets by recording the number
of cache misses using \texttt{perf}. \newrev{All times are the average of three
  trials after one warm-up trial.}

\para{Notation} We use Marisa-$i$ to denote Marisa with $i$ levels of
recursion. Furthermore, for a given trie X, we use \bvdesign-X to denote X
optimized with \bvdesign (\secref{bv-design}). We use \csquared-X to denote X
with both \bvdesign and \compressionscheme. For example, \bvdesign-FST is
\bvdesign-optimized FST with sorted tail container, whereas \csquared-Marisa-1
is \bvdesign-optimized Marisa with \fsst tail container and one level of
recursion.

\subsection{Building \csquared-CoCo on larger datasets}\label{sec:coco-prime}

\input{latexfigs/table-c1-ablation}
Since the original CoCo-trie implementation
runs out of memory when building the larger prefix-only datasets, we implement
an optimized build routine underneath \csquared-CoCo to support all datasets.
The original CoCo is built from a pointer-based uncompacted trie, which not only
consumes excessive memory but also incurs significant cache misses when
traversing each node's descendants at different levels, a key operation of
CoCo-trie's optimization process. We optimize the builder by representing the
uncompacted trie as \csquared-FST, which significantly reduces cache misses
because in LOUDS-Sparse representation each node's descendants at the same level
fall into contiguous memory. Our implementation reduces the build time of
\csquared-CoCo by up to $14\times$ and uses orders of magnitude less memory
during build than the original CoCo-trie.

We introduce \defn{CoCo'}: \csquared-CoCo's build routine with the original
CoCo-trie \bv, to isolate \bv differences. 

\mymarginpar{R2 D8, R2 D11}\rev{We use CoCo' as the CoCo-trie baseline for
  larger datasets. As shown in~\tabref{coco-vs-coco-prime}, on the prefix-only
  datasets that the original CoCo-trie can build on, CoCo-trie and CoCo' have
  similar query latencies (within \newrev{100} ns). However, the original CoCo-trie suffers from poor
  performance and space usage on full datasets because it does not handle unary
  suffix paths. For example, on \urldata, a relatively small dataset, the
  original CoCo-trie implementation requires more than 10 minutes and 60 GB of
  memory to build. Furthermore, it takes over 2000 ns/query and occupies nearly
  half as much memory as the original dataset, both the worst among all tested
  tries.}

\input{latexfigs/table-alation}

\input{latexfigs/table-c2-full}

\subsection{Ablation Study and Sensitivity Analysis}\label{sec:ablation-study}

\figref{intro-results-summary} shows ablation results of \bvdesign and \csquared
across configurations of FST, CoCo, and Marisa. \tabref{c2-full} contains the
full data for the different \csquared configurations discussed
in~\secreftwo{ablation-study}{performance-comparison}.

\newmymarginpar{Meta 5, R3 O4}\newnew{\para{Effect of hybrid
    \bv}~\tabref{fst-ablation-rel} evaluates the effect of switching to
  LOUDS-Sparse in FST~\cite{surf}. As mentioned in~\secref{preliminaries}, the
  original FST introduced a hybrid \bv with LOUDS-Sparse and LOUDS-Dense. We
  find that the hybrid \bv improves the performance of both build and query in
  the baseline FST (with no special tail container). However, the effect is much
  smaller on \csquared-FST with the FSST tail container, so we switch to the fully
  sparse version in the final \csquared configuration.}

\para{Effect of cache-optimized \bv design}
  First, we isolate the effect of \bvdesign, the cache-optimized \bv design.
  We perform the ablation study on the datasets in~\tabref{datasets} (1) with the
  sorted tail container and no recursion to avoid confounding factors from path
  compression, and (2) with all branching labels stored out of place to suppress
  intra-node search effects\newrev{, so that the remaining query-time difference
  is attributable to the \bv layout alone}.

  The ablation study in \tabref{c2-full} shows that the \bvdesign optimization
  improves query performance by \bvfstspeedup, \bvcocospeedup, and
  \bvmarisaspeedup, respectively, compared to the original FST, CoCo-trie, and
  Marisa. \newrev{These trie-level gains compound directly from the per-operator
  speedups in \tabref{trie-op}, since every full lookup issues
  $O(\mathit{depth})$ rank queries on the topology bitvector and a small
  constant number of select queries, each of which benefits from the co-located
  block metadata in our layout.}

  The bitvector redesign has the least impact in the CoCo-trie because the query
  time in CoCo is dominated by the cost to read the encodings, and each
  CoCo macro-node already amortises several traversal steps over one \bv
  access. However, FST and Marisa improve significantly because \bv cache
  misses dominate query time: each rank query in their baselines incurs
  two cache misses --- one for the block summary and one for the bit field ---
  which our co-located layout collapses to a single line.

  \input{latexfigs/table-trie-op}

  \tabref{trie-op} reports the effect of \bvdesign on the fine-grained \bv
  operations that make up succinct-trie navigation. The results demonstrate that
  \bvdesign improves the performance of key \newrev{rank-based operations such
  as \texttt{leaf\_id} and \texttt{internal\_id} by $1.2-7.1\times$.}
  We observe similar trends for trie-specific operations, such as parent
  navigation in Marisa\newrev{ ($1.76\times$ on \texttt{parent\_pos}) and child
  navigation across all three tries ($1.3-2.8\times$ on \texttt{child\_pos})}.
  \newrev{The only operators where the baseline is faster are \texttt{get} and
  \texttt{degree}, which perform linear scans through bits and do not use rank
  and select\newrev{; interleaving block metadata with the bit field slightly
  reduces the effective bit density per cache line on such scans}. However,
  neither of these lie on the critical path of a full lookup or range
  query\newrev{, and both are invoked $O(1)$ times per query at sub-3-ns cost
  each, so the few nanoseconds lost are dwarfed by the tens of nanoseconds
  saved on every rank query}}.

\begin{figure*}[t]
    \centering
    \includegraphics[width=\textwidth]{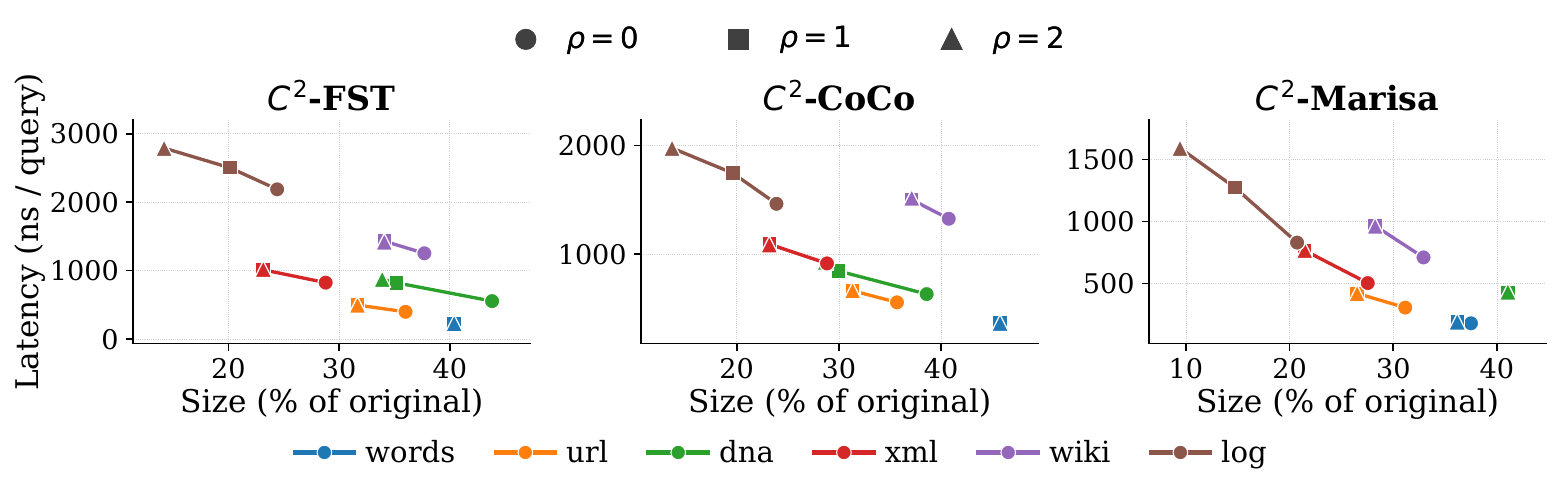}
    \caption{
    Space-latency Pareto frontier under recursion depths
    $\rho\in\{0,1,2\}$ for the three trie variants across all six datasets.}
    \label{fig:recursion-pareto}
\end{figure*}

\para{Effect of unary-path compression} \new{Next, we
  evaluate the effect of the \compressionscheme compression
  scheme. Specifically, for all tries, we measure the effect of replacing the
  sorted tail container with \fsst. Furthermore, we test different recursion
  levels to evaluate the space-time tradeoff exposed by more recursions.}

\newrev{Replacing the sorted tail container with \fsst improves space usage by
  \csqfstspacesaving, \csqcocospacesaving, and \csqmarisaspacesaving over
  \bvdesign-FST, \bvdesign-CoCo, and \bvdesign-Marisa, respectively. The effect
  on query performance is minimal (1-1.05$\times$ on average).
}



As shown in~\figref{recursion-pareto}, recursion levels 0 and 1 expose Pareto-optimal space-time trade-offs.
  For example, on \csquared-Marisa, on the \logdata dataset, no recursion takes
  829 ns/query, while one level of recursion takes 1308 ns/query. On the other
  hand, the space usage improves from 20.7\% to 14.7\%. Similarly, on the
  \wikidata dataset, no recursion takes 732 ns/query compared to 979 ns/query
  with one level of recursion. For \wikidata, one level of recursion improves
  the space usage from 32.9\% to 28.3\%.

  The same pattern holds for \csquared-FST and \csquared-CoCo. On the \logdata
  dataset, \csquared-FST without recursion takes 2187 ns/query at 24.4\% space,
  while one level of recursion takes 2502 ns/query at 20.1\% space; on the same
  dataset, \csquared-CoCo without recursion takes 1463 ns/query at 23.9\% space,
  while one level of recursion takes 1745 ns/query at 19.6\% space. The same
  trade also holds on \wikidata, \xmldata, \urldata, and \dnadata: across our
  six datasets, the absolute space savings from recursion levels 0 to 1 range
  from 0 percentage points (e.g., \wordsdata, whose avg.\ longest common prefix
  of 6 leaves no recursive sub-prefix structure to extract) to roughly 6
  percentage points (\logdata across all three variants), while the
  corresponding latency degradation stays below 1.5$\times$ in every cell.

  Beyond one level of recursion, further recursion's space savings rarely justify the
  performance penalty. For example, if we continue to recurse \csquared-Marisa-1
  on \logdata, the relative space usage improves by 1.6$\times$ (which
  corresponds to a 5.3 percentage point improvement relative to the original
  data size, from 14.7\% to 9.4\%), but the query latency degrades by 1.25$\times$
  (from 1308 to 1593 ns/query); on \xmldata, the same step yields no measurable
  further space improvement. In fact, \logdata is the only dataset on which
  $\rho{=}2$ strictly Pareto-dominates $\rho{=}1$ across all three tries, owing
  to its 113-character average key length harbouring nested common prefixes
  that a second recursion round can still extract. We observe one further
  anomaly: on \dnadata, \csquared-Marisa is fully insensitive to recursion ---
  its size remains at 41.1\% of the original data and its latency varies by less
  than 1\% across $\rho \in \{0,1,2\}$ --- because its unary-path encoder
  already absorbs DNA's unique-suffix structure at $\rho{=}0$, leaving no
  inter-tail prefix redundancy for the adaptive cost-based recursive
  string-pool builder to exploit.

  Going forward, we limit the recursion level of \csquared-FST and \csquared-CoCo
  to 0 to match the original designs. For \csquared-Marisa, we use the recursion
  level prescribed by the adaptive compression scheme. We expose recursion as
  an option for space-constrained settings: dialling \csquared-Marisa up to
  $\rho{=}2$ on long-key datasets like \logdata trades a further 5 percentage
  points of space for a 1.25$\times$ query latency penalty, while on every
  other dataset the same dial yields at most one percentage point of additional
  space and is therefore Pareto-dominated by $\rho{=}1$.

\begin{figure*}[t]
    \centering

    \begin{subfigure}[t]{0.32\textwidth}
        \centering
        \includegraphics[width=\linewidth]{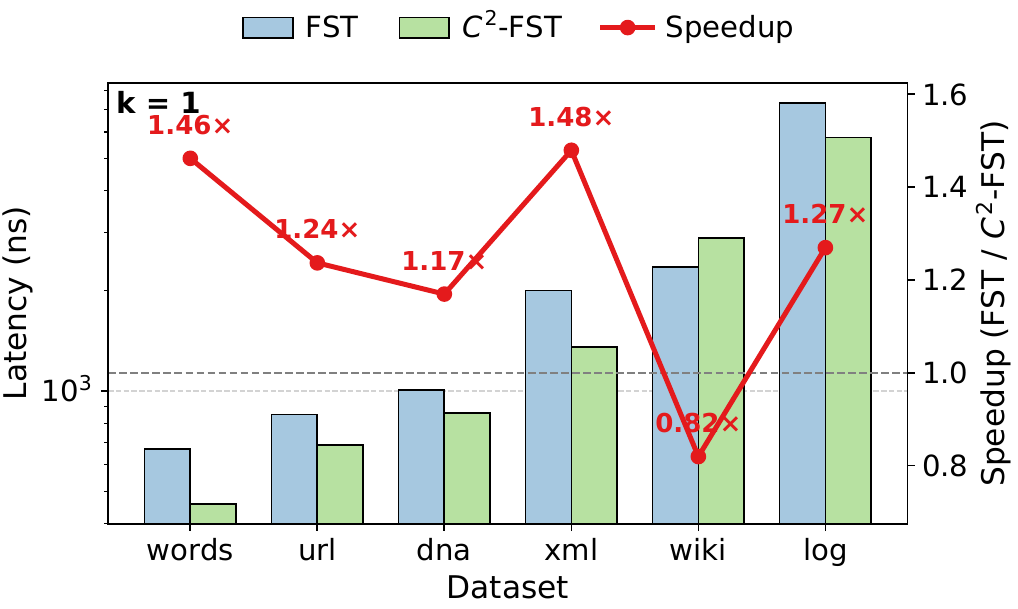}
        \label{fig:range-fst-w10}
    \end{subfigure}
    \hfill
    \begin{subfigure}[t]{0.32\textwidth}
        \centering
        \includegraphics[width=\linewidth]{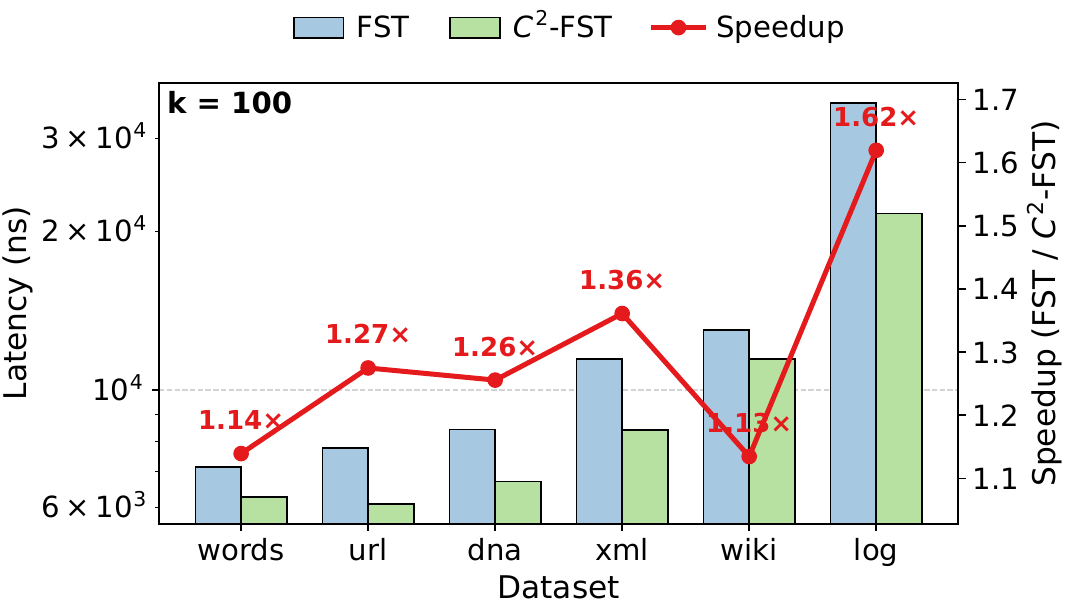}
        \label{fig:range-fst-w100}
    \end{subfigure}
    \hfill
    \begin{subfigure}[t]{0.32\textwidth}
      \centering
      \includegraphics[width=\linewidth]{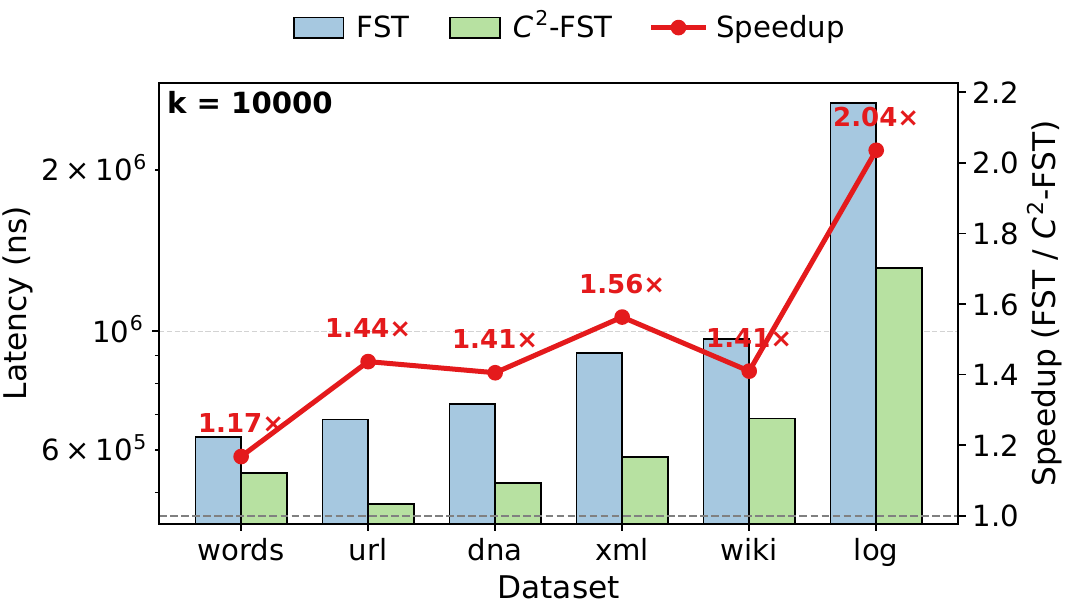}
        \label{fig:range-fst-w1000}
    \end{subfigure}
    \caption{ \newnew{ Range-query latency comparison between FST-baseline and
        \csquared-FST across six datasets under different range widths $k$.  Each
        subplot corresponds to a fixed range width.  Bars report absolute
        latency (left axis), while the red line indicates the speedup of
        \csquared-FST over the baseline (right axis).} }

    \label{fig:range-fst-vs-c2fst}
\end{figure*}

\subsection{Comparison to State-of-the-art
  Tries}\label{sec:performance-comparison}\mymarginpar{R2 D10}

We compare \csquared tries with all baselines.\rev{~\tabref{c2-full} reports the
  full data and the configurations prescribed by the \compressionscheme, the
  adaptive compression algorithm.}

\para{Existence queries} \rev{Among all tries, ART always has the best query
  performance at the cost of high space usage. Child navigation in ART simply
  requires following a pointer, while succinct tries must perform complex \bv
  operations. \mymarginpar{R1 D5, R3 D2}\new{Furthermore,
    C-ART~\cite{hybrid_index} uses an internal-node organization that trades
    search performance for space savings over ART.}} \newmymarginpar{Meta 4, R2
  O2}\newrev{Both c-trie++ and z-fast trie are slower than ART for queries while
  consuming significantly more space.}

\new{\tabref{c2-full} demonstrates that \compressionscheme does not affect the
  query-time gains from \bvdesign across tries and datasets.}  \rev{Among the
  succinct tries}, \csquared-Marisa achieves the lowest query latency with a
space usage within $1.2\times$ of the best. On average, \csquared-Marisa
improves the query latency of the original Marisa by \csqmarisaspeedup.

On \texttt{words} and \texttt{url}, \csquared-Marisa-1's space consumption is
relatively high compared to \rev{Marisa}-1 \mymarginpar{R3 O4}\rev{because in
  this case, the \compressionscheme scheme stores short unary paths in place
  rather than in the tail container (see~\secref{unary-path}). For example, on
  the \texttt{words} dataset, \csquared-Marisa contains only about .3$\times$ as many
  links as Marisa.}

\csquared-FST improves query latency/space usage by
\csqfstspeedup/\csqfstspacesaving on average compared with FST.
\csquared-CoCo does not benefit as \rev{much} from \bvdesign as it incurs fewer
\bv operations, but still improves both space usage and query latency compared
to \rev{CoCo'}.

\para{Time-space tradeoff} \new{We next examine query time and space
  usage. \compressionscheme improves space usage by $1.34\times$ \rev{on
    average} compared to their original versions.}

\new{On average, with no recursion, switching from the
  sorted tail container to FSST reduces the space usage by \toplevelspacesavings at the
  cost of a $1.05\times$ query performance penalty and $1.1\times$ higher build
  time across all tested succinct tries.

  Additionally, for Marisa, on the datasets where \compressionscheme prescribes
  one level of recursion, the recursion improves the space efficiency by
  $1.29\times$ on average, but also incurs a query performance penalty of
  $1.43\times$ compared to no recursions.}  \rev{Specifically, recursion turns
  out to be extremely effective on the highly redundant \logdata dataset. For
  \csquared-Marisa, one level of recursion reduces the space cost by
  $1.74\times$, but increases the query time by $1.7\times$ compared to no
  recursions due to the presence of many recursive paths.}

\new{On \wordsdata, \urldata and \wikidata, \csquared-Marisa-1 has higher space
  usage (by at most $1.19\times$) than Marisa-1 because of the space overhead of
  storing the branching labels in place
  (see~\secref{unary-path}). 
}

On the \rev{larger datasets (e.g., \logdata)}, \csquared-FST and \csquared-CoCo
are generally less competitive than \csquared-Marisa in both query performance
and compression ratio, due to their handling of internal unary
paths. \csquared-FST does not handle such paths at all, \rev{and} \csquared-CoCo
does not address the data redundancy in such paths. Furthermore, unary paths
that do not fit the machine word size will be fragmented in \csquared-CoCo,
impairing data locality.

\newrev{\para{Range queries} \newmymarginpar{Meta 3, R1 O3}
  ~\figref{range-fst-vs-c2fst} shows that the \csquared optimizations speed up
  range queries by $1.2-1.5\times$ in FST, which was the only tested trie that
  natively supports successor queries. The locality improvements in
  \csquared-FST carry over into range queries, where each \texttt{next()} step
  can be resolved from data in the packed 256-bit blocks, which are already in
  cache. In contrast, the original FST fetches the equivalent information from
  three separately heap-allocated arrays, incurring up to three cache-line
  misses per step.

  The relative advantage of \csquared-FST increases monotonically with range
  width, with speedups on all but one configuration (\wikidata with $k =
  1$). The \wikidata contains many diverse suffixes, so the FSST tail container
  in \csquared-FST finds almost nothing to compress, as shown
  in~\tabref{c2-full}. However, \csquared-FST must incur additional indirections
  during traversal to check the \texttt{is\_link} array and fetch the compressed
  suffix, while the original FST can do a direct comparison.  }

\para{Build time}\label{sec:build-performance} While modern static succinct
tries excel at space efficiency and query latency, their build \rev{times are}
significantly worse (\rev{between 3.2-24.7$\times$}) than \rev{the build times}
of pointer-based tries such as ART and C-ART, especially \rev{on large
  datasets}.\newmymarginpar{R3 D8} \newrev{Depending on the dataset, the
  \csquared optimizations can increase the build time by about $2\times$ due to
  the time it takes to compress the tail container. We chose FSST as the tail
  container to try to minimize build times compared to approximate \repair,
  which can take another $2\times$ longer, as shown in PDT's build times.}
\rev{Furthermore, building is a one-time cost for succinct tries and can be
  amortized over many queries.}
  Future work will parallelize the build phase of succinct tries.


%% file: latexfigs/table-datasets.tex
\begin{table}[t]
  \caption{Datasets used in evaluation. All dataset sizes are expressed in
    MB. LCP means length of longest common prefix.  \textit{Size*} is the size
    of the prefix-only version of each dataset.}
  \resizebox{\columnwidth}{!}{
 \newrev{  \begin{tabular}{l r r r r r l}
             \toprule

    \textit{Name} & \textit{Size} & \textit{Size*} & \textit{\#Keys}
    &\makecell{\textit{Avg} \\ \textit{len}} &\makecell{\textit{Avg} \\ \textit{LCP}} & \textit{Description}\\
    \midrule

    \wordsdata~\cite{words-dataset}
    & 5
    & 4
    & 4.67E5
    & 9
    & 6
    & \small{English words}\\

    \urldata~\cite{web-crawl}
    & 37
    & 17
    & 1.77E6
    & 21
    & 7
    & \small{UK private domains}\\

    \dnadata~\cite{text-collection}
    & 101
    & 44
    & 3.32E6
    & 31
    & 11
    & \small{DNA 31-mers}\\

    \xmldata~\cite{text-collection}
    & 117
    & 74
    & 2.15E6
    & 56
    & 33
    & \small{dblp XML dump} \\

    \logdata~\cite{log-dataset}
    & 586
    & 258
    & 4.45E6
    & 137
    & 54
    & \small{server access logs}\\

    \wikidata~\cite{wiki-dataset}
    & 359
    & 236
    & 1.75E7
    & 21
    & 11
    & \small{Wikipedia titles}\\
    \bottomrule
  \end{tabular}}
}
  \label{tab:datasets}
\end{table}


%% file: latexfigs/table-c1-ablation.tex
  \begin{table}[t]
      \caption{
    Comparison of CoCo and CoCo$'$ (\queryshort, in ns per query) and space usage (\sizeshort,
    in \% of original dataset size) on the
    prefix-only datasets used in the original CoCo-trie
    evaluation.}
        \begin{center}
            \newcolumntype{P}{>{\raggedleft\arraybackslash}p{.24in}}
            \resizebox{\columnwidth}{!}{
             \newrev{  \begin{tabular}{l P P P P P P P P }
                \toprule

                 & \multicolumn{2}{c}{\wordsstar}
                 & \multicolumn{2}{c}{\urlstar}
                 & \multicolumn{2}{c}{\dnastar}
                 & \multicolumn{2}{c}{\xmlstar}\\

                \cmidrule(lr){2-3}
                \cmidrule(lr){4-5}
                \cmidrule(lr){6-7}
                \cmidrule(lr){8-9}

               \textit{Trie} &
               \textit{\queryshort} & \textit{\sizeshort} &
               \textit{\queryshort} & \textit{\sizeshort} &
               \textit{\queryshort} & \textit{\sizeshort} &
               \textit{\queryshort} & \textit{\sizeshort} \\

                \midrule

                  CoCo
                  & 315 & 62.7\%
                  & 742 & 42.4\%
                  & 655 & 24.0\%
                  & 1039 & 47.5\% \\

                  CoCo$'$
                  & 380 & 54.8\%
                  & 811 & 32.3\%
                  & 743 & 40.1\%
                  & 1011 & 29.6\% \\

                \bottomrule
            \end{tabular}}
            }
        \end{center}

    \label{tab:coco-vs-coco-prime}
    \end{table}

%% file: latexfigs/table-alation.tex
\begin{table}[t]
\caption{
Normalized build and query time (relative to \csquared-FST; lower is better)
for different LOUDS-Sparse/Dense configurations of FST.
FST-Sparse uses LOUDS-Sparse/LOUDS-Dense, while FST-Hybrid uses
LOUDS-Sparse, both with a sorted tail container.
\csquared-FST uses the same bitvectors with the FSST tail container.
}
\centering
\setlength{\tabcolsep}{5pt}
\begin{tabular}{lrrrr}
\toprule
& \multicolumn{2}{c}{\textit{words} (0.47M keys)}
& \multicolumn{2}{c}{\textit{log} (4.45M keys)} \\
\cmidrule(lr){2-3}
\cmidrule(lr){4-5}
& \textit{Build}
& \textit{Query}
& \textit{Build}
& \textit{Query} \\
\midrule
FST-Sparse
    & 7.73$\times$
    & 3.50$\times$
    & 2.30$\times$
    & 2.73$\times$ \\

FST-Hybrid
    & 7.02$\times$
    & 2.85$\times$
    & 2.24$\times$
    & 2.57$\times$ \\

\csquared-FST-Sparse
    & 5.31$\times$
    & 1.52$\times$
    & 1.53$\times$
    & 1.82$\times$ \\

\csquared-FST-Hybrid
    & 5.29$\times$
    & 1.45$\times$
    & 1.55$\times$
    & 1.85$\times$ \\
\bottomrule
\end{tabular}
\label{tab:fst-ablation-rel}
\end{table}

%% file: latexfigs/table-c2-full.tex

\begin{table*}[t]
 
\caption{
Build time,
query latency,
and space usage on the datasets in~\tabref{datasets}.
The prefixes
        \bvdesign/\csquared indicate that the
        \bvdesign/\bvdesign+\compressionscheme optimizations are applied.  We
        fix the tail container to be FSST with the \compressionscheme
        optimization, but we report results with the sorted tail container for
        the ablation study on the \bvdesign optimization. Marisa-1 means the
        Marisa trie with 1 recursion level.  For Marisa, we show all data points
        before \compressionscheme stops recursion.  For all other tries, the
        recursion level was set to 0 for a fair
        comparison. For each column, green cells highlight the best value and blue cells highlight the second-best value.
}
    \begin{center}
        \newcolumntype{P}{>{\raggedleft\arraybackslash}p{.24in}}
        \resizebox{\textwidth}{!}{
\newrev{          \begin{tabular}{r P P P P P P P P P P P P P P P P P P}
 
            \toprule
 
             & \multicolumn{6}{c}{\textit{\buildtimeshort}   \textit{(ns/key)}}
             & \multicolumn{6}{c}{\textit{\queryshort} \textit{ (ns/key)}}
             & \multicolumn{6}{c}{\textit{\sizeshort} \textit{(\% of original size)}} \\
 
             \cmidrule(lr){2-7}
             \cmidrule(lr){8-13}
             \cmidrule(lr){14-19}
 
           \textit{Trie} &
           \textit{\wordsdata} & \textit{\urldata} & \textit{\dnadata} &
           \textit{\xmldata} & \textit{\wikidata} & \textit{\logdata} &
           \textit{\wordsdata} & \textit{\urldata} & \textit{\dnadata} &
           \textit{\xmldata} & \textit{\wikidata} & \textit{\logdata} &
           \textit{\wordsdata} & \textit{\urldata} & \textit{\dnadata} &
           \textit{\xmldata} & \textit{\wikidata} & \textit{\logdata} \\
 
            \midrule
 
FST
& 309&691&802&1157&716&2139
& 403&610&844&1541&1424&4003
& 45.5\%&35.5\%&74.3\%&39.2\%&37.9\%&39.5\%\\
 
\bvdesign-FST
& \cellcolor{blue!20}90&1043&1148&897&762&2173
& 228&404&562&835&1242&2197
& 40.4\%&39.8\%&75.9\%&39.7\%&38.4\%&40.4\%\\
 
\csquared-FST
& 101&1068&1114&894&729&1821
& 227&404&561&842&1258&2154
& 40.4\%&36.0\%&43.8\%&28.8\%&37.7\%&24.4\%\\
 
\midrule
 
 
CoCo'
& 790&1639&1875&1913&2114&7258
& 389&626&749&1027&1539&1597
& 51.5\%&41.7\%&73.2\%&40.6\%&43.7\%&40.2\%\\

\bvdesign-CoCo
& 758&1595&1919&1949&2145&7562
& 355&562&695&907&1320&1465
& 45.8\%&39.5\%&70.7\%&39.8\%&41.4\%&39.9\%\\
 
\csquared-CoCo
& 771&1637&1852&1900&2074&7237
& 358&560&613&898&1342&1448
& 45.8\%&35.7\%&38.5\%&28.9\%&40.7\%&23.9\%\\
 
\midrule
 
Marisa
& 234&470&676&712&521&1270
& 272&460&663&669&991&1098
& \cellcolor{blue!20}33.3\%&35.3\%&76.4\%&40.3\%&32.4\%&37.3\%\\
 
\bvdesign-Marisa
& 131&1046&1175&991&865&2031
& 178&\cellcolor{blue!20}310&443&\cellcolor{green!25}483&739&\cellcolor{blue!20}831
& 37.5\%&36.4\%&75.4\%&39.5\%&34.4\%&36.4\%\\
 
\csquared-Marisa
& 168&1064&1139&933&799&1613
& 182&\cellcolor{green!25}308&\cellcolor{blue!20}439&\cellcolor{blue!20}500&732&\cellcolor{green!25}829
& 37.5\%&30.9\%&41.0\%&27.6\%&32.9\%&20.7\%\\
 
Marisa-1
& 239&558&1002&945&599&1405
& 301&583&968&960&1244&1538
& \cellcolor{green!25}29.5\%&\cellcolor{green!25}22.0\%&34.3\%&25.2\%&\cellcolor{green!25}25.1\%&28.6\%\\
 
\bvdesign-Marisa-1
& 141&1229&2275&1476&1025&2209
& 178&425&686&746&967&1229
& 37.5\%&26.5\%&\cellcolor{blue!20}33.8\%&26.3\%&28.9\%&28.2\%\\
 
\csquared-Marisa-1
& 169&1294&2369&1473&997&1938
& 196&424&\cellcolor{green!25}436&764&979&1306
& 36.2\%&26.3\%&41.0\%&\cellcolor{blue!20}21.5\%&\cellcolor{blue!20}28.3\%&\cellcolor{blue!20}14.7\%\\
 
\midrule
 
PDT
& 624&1582&2947&4593&2040&3912
& 356&560&795&841&1022&1138
& 33.7\%&\cellcolor{blue!20}26.2\%&\cellcolor{green!25}28.4\%&\cellcolor{green!25}19.7\%&30.1\%&\cellcolor{green!25}13.0\%\\
 
ART
& \cellcolor{green!25}73&\cellcolor{green!25}83&\cellcolor{green!25}76&\cellcolor{green!25}115&\cellcolor{green!25}92&\cellcolor{green!25}289
& \cellcolor{green!25}132&347&533&728&\cellcolor{green!25}713&1611
& 346.8\%&156.2\%&115.0\%&59.1\%&165.7\%&27.6\%\\
 
C-ART
& 119&\cellcolor{blue!20}150&\cellcolor{blue!20}179&\cellcolor{blue!20}178&\cellcolor{blue!20}220&404
& \cellcolor{blue!20}148&401&674&750&810&1758
& 155.0\%&68.6\%&53.8\%&27.1\%&77.5\%&16.0\%\\
 
c-trie++
& 401&654&365&344&644&\cellcolor{blue!20}315
& 217&495&683&915&\cellcolor{blue!20}716&1507
& 1540.1\%&747.0\%&513.5\%&290.6\%&788.5\%&157.1\%\\
 
z-fast trie
& 447&779&989&1587&1168&2531
& 541&1158&1394&2016&1192&2838
& 1084.5\%&586.6\%&400.2\%&223.9\%&546.6\%&92.7\%\\

          \bottomrule
          \end{tabular}}
        }
    \end{center}
 
\label{tab:c2-full}
 
\end{table*}

%% file: latexfigs/table-trie-op.tex
\begin{table}[t]
  \caption{Bitvector Operation Latency (ns) on XML Dataset (2.1M keys).
  \textit{Speedup} = Baseline / \bvdesign. Bold entries indicate where \bvdesign
  outperforms the baseline.}
  \centering
  \small
 \newrev{ \begin{tabular}{l l r r r}
  \toprule
  \textit{Trie} & \textit{Operation} & \textit{Baseline (ns)} & \textit{\bvdesign (ns)} & \textit{Speedup}
  \\
  \midrule
  \multirow{4}{*}{FST}
    & \texttt{get}        & 1.48  & 1.52           & $0.98\times$ \\
    & \texttt{leaf\_id}   & 10.32 & \textbf{1.45}  & $\mathbf{7.13\times}$ \\
    & \texttt{degree}     & 1.09  & 2.75           & $0.40\times$ \\
    & \texttt{child\_pos} & 48.62 & \textbf{17.09} & $\mathbf{2.84\times}$ \\
  \midrule
  \multirow{5}{*}{CoCo}
    & \texttt{get}           & 0.64  & 1.06           & $0.60\times$ \\
    & \texttt{leaf\_id}      & 2.10  & \textbf{1.72}  & $\mathbf{1.23\times}$ \\
    & \texttt{internal\_id}  & 4.18  & \textbf{2.82}  & $\mathbf{1.48\times}$ \\
    & \texttt{degree}        & 1.83  & 2.34           & $0.78\times$ \\
    & \texttt{child\_pos}    & 22.60 & \textbf{17.87} & $\mathbf{1.26\times}$ \\
  \midrule
  \multirow{5}{*}{Marisa}
    & \texttt{get}         & 0.98  & 1.48           & $0.66\times$ \\
    & \texttt{link\_id}    & 8.20  & \textbf{1.23}  & $\mathbf{6.66\times}$ \\
    & \texttt{leaf\_id}    & 9.00  & \textbf{1.38}  & $\mathbf{6.53\times}$ \\
    & \texttt{child\_pos}  & 30.31 & \textbf{16.62} & $\mathbf{1.82\times}$ \\
    & \texttt{parent\_pos} & 28.68 & \textbf{16.29} & $\mathbf{1.76\times}$ \\
  \bottomrule
  \end{tabular}}
  \label{tab:trie-op}
\end{table}

%% file: conclusion.tex
\section{Conclusion}\label{sec:conclusion}

\balance This paper presents \csquared, guidelines for transforming
bitvector-based succinct tries to more cache-friendly and compact
versions. \csquared includes two optimizations: \bvdesign, the
\underline{C}ache-conscious bitvector design,
and 
\compressionscheme, the adaptive unary-path \underline{C}ompression.
Following \csquared, we redesign three state-of-the-art succinct tries:
\rev{FST, CoCo-trie, and Marisa}. \rev{On average, the three}
\csquared-optimized tries improve query time by
\newrev{\csqfstspeedup/\csqcocospeedup/\csqmarisaspeedup},
respectively, while using
\newrev{\csqfstspacesaving/\csqcocospacesaving/\csqmarisaspacesaving less space}.

Future work will 1) apply similar \bv optimizations to
DFUDS/BP tries, like PDT or the DFUDS variant of CoCo-trie, \mymarginpar{R2 W2,
  R2 D4}\rev{2) theoretically bound the space usage of practically-efficient
  compression schemes such as approximate \repair~\cite{pdt} and
  \fsst~\cite{fsst},} and 3) integrate other powerful text-compression
algorithms, such as LZW~\cite{lzw}, \rev{with} succinct tries.
